\documentclass[
aip,jmp,
reprint,
a4paper,
longbibliography,
nofootinbib,
nobibnotes,
]{revtex4-2}
\usepackage[utf8]{inputenc}
\usepackage[T1]{fontenc}

\usepackage{amssymb}
\usepackage{amsthm}
\usepackage{braket}
\usepackage{graphicx}
\usepackage{mathtools}
\usepackage{array}
\newcolumntype{C}{>{$}c<{$}}
\usepackage{hyperref}
\hypersetup{citecolor=magenta}
\hypersetup{colorlinks=true}
\hypersetup{linkcolor=blue}
\hypersetup{urlcolor=blue}

\newcommand\id\openone
\newcommand\oo{\mathbb O}
\newcommand\ketbra[2]{\ket{#1}\!\bra{#2}}

\newcommand\naturals{\mathbb N}
\newcommand\complexes{\mathbb C}
\newcommand\chif{\chi_\mathrm{f}}
\renewcommand\vec\mathbf

\DeclareMathOperator\argmin{arg~min}
\DeclareMathOperator\diag{diag}
\DeclareMathOperator\rank{rank}
\DeclareMathOperator\tr{tr}
\DeclareMathOperator\rankb{RANK}
\DeclareMathOperator\lrankb{LRANK}
\DeclareMathOperator\stabb{STAB}
\DeclareMathOperator\thb{TH}

\DeclarePairedDelimiter\abs\lvert\rvert
\DeclarePairedDelimiter\ceil\lceil\rceil
\DeclarePairedDelimiter\norm\lVert\rVert

\newtheorem{theorem}{Theorem}
\newtheorem{assertion}[theorem]{Assertion}
\newtheorem{lemma}[theorem]{Lemma}
\newtheorem{corollary}[theorem]{Corollary}

\usepackage{xcolor}
\definecolor{darkred}{rgb}{0.7 0.0 0.0}

\begin{document}
\title{State-independent quantum contextuality with projectors of nonunit rank}

\author{Zhen-Peng Xu}
\email{zhen-peng.xu@uni-siegen.de}
\author{Xiao-Dong Yu}
\email{xiao-dong.yu@uni-siegen.de}
\affiliation{Naturwissenschaftlich--Technische Fakult\"{a}t, Universit\"{a}t Siegen, 
Walter-Flex-Stra{\ss}e 3, 57068 Siegen, Germany}
\author{Matthias Kleinmann}
\email{matthias.kleinmann@uni-siegen.de}
\affiliation{Faculty of Physics, University of Duisburg--Essen, Lotharstra{\ss}e  
1, 47048 Duisburg, Germany}
\affiliation{Naturwissenschaftlich--Technische Fakult\"{a}t, Universit\"{a}t Siegen, 
Walter-Flex-Stra{\ss}e 3, 57068 Siegen, Germany}

\begin{abstract}
Virtually all of the analysis of quantum contextuality is restricted to the 
case where events are represented by rank-one projectors.
This restriction is arbitrary and not motivated by physical considerations.
We show here that loosening the rank constraint opens a new realm of quantum 
contextuality and we demonstrate that state-independent contextuality can even 
require projectors of nonunit rank.
This enables the possibility of state-independent contextuality with less than 
  13 projectors, which is the established minimum for the case of rank one.
We prove that for any rank, at least 9 projectors are required.
Furthermore, in an exhaustive numerical search we find that 13 projectors are 
also minimal for the cases where all projectors are uniformly of rank two or 
uniformly of rank three.
\end{abstract}

\maketitle

\section{Introduction}

Experiments provide strong evidence that the measurements on quantum systems 
cannot be reproduced by any noncontextual hidden variable model (NCHV).
In a NCHV model each outcome of any measurement has a preassigned value and 
this value in particular does not depend on which other properties are obtained 
alongside.
This phenomenon is called quantum contextuality.
Being closely connected to the incompatibility of observables 
\cite{xu2019necessary}, quantum contextuality is the underlying feature of 
quantum theory that enables, for example, the violation of Bell inequalities 
\cite{bell1964epr}, enhanced quantum communication \cite{cubitt2010improving, 
saha2019state}, cryptographic protocols \cite{cabello2011hybrid, 
ekert1991quantum}, quantum enhanced computation \cite{howard2014contextuality, 
raussendorf2013contextuality}, and quantum key distribution 
\cite{barrett2005no}. 

The first example of quantum contextuality was found by Kochen and Specker 
\cite{kochen1968problem} and requires 117 rank-one projectors.
Subsequently the number of projectors was reduced until it was proved that the 
minimal set has 18 rank-one projectors \cite{cabello1996bell}. This analysis 
was based on the particular type of contradiction between value assignments and 
projectors that was already used in the original proof by Kochen and Specker. 
The situation changed with the introduction of state-independent 
noncontextuality inequalities, where any NCHV model obeys the inequality, while 
it is violated for any quantum state and a certain set of projectors. With this 
enhanced definition of state-independent contextuality (SIC), Yu and Oh 
\cite{yu2012stateindependent} found an instance of SIC with only 13 rank-one 
projectors and subsequently it was proved that this set is minimal 
\cite{cabello2016quantum} provided that all projectors are of rank one.
Note that the iconic example of the Peres--Mermin square 
\cite{peres1990incompatible,mermin1990simple} uses 9 observables with two-fold 
degenerate eigenspaces, but they are combined to 6 measurements of 24 rank-one 
projectors.

In contrast, SIC involving nonunit rank projectors has been rarely considered. 
To the best of our knowledge, the only examples \cite{kernaghan1995kochen, 
mermin1990simple, toh2013kochen, toh2013stateindependent} which use nonunit 
rank are based on the Mermin star \cite{mermin1993hidden}.
In these examples it was shown that nonunit projectors are sufficient for SIC, 
but it was not shown whether nonunit projectors are also necessary for SIC.
Furthermore, in a graph theoretical analysis by Ramanathan and Horodecki 
\cite{ramanathan2014necessary} a necessary condition for SIC was provided which 
also allows one to study the case of nonunit rank.

In this article, we develop mathematical tools to analyze SIC for the case of 
 nonunit rank.
We first show that in certain situations nonunit rank is necessary for SIC.
Then we approach the question whether projectors with nonunit rank enable SIC 
with less than 13 projectors.
We find that in this case at least 9 projectors are required.
For the special cases of SIC where all projectors are of rank 2 or rank 3 we 
find strong numerical evidence that 13 is indeed the minimal number of 
projectors.

This paper is structured as follows.
In Section~\ref{sec:graph} we give an introduction to quantum contextuality 
using the graph theoretic approach.
We extend this discussion to SIC in Section~\ref{sec:sic} and we give an 
example where rank-two projectors are necessary for SIC.
In Section~\ref{sec:rank} we provide a general analysis of the case of nonunit 
rank and show that scenarios with 8 or less projectors do not feature SIC, 
irrespective of the involved ranks.
This analysis is used in Section~\ref{sec:exhaustive} to show in an exhaustive 
numerical search that all graphs smaller than the graph given by Yu and Oh do 
not have SIC, if the rank of all projectors is 2 or 3.
We conclude in Section~\ref{sec:conclusion} with a discussion of our results.

\section{Contextuality and the graph theoretic approach}\label{sec:graph}

Our analysis is based on the graph theoretic approach to quantum contextuality 
\cite{cabello2014graphtheoretic}. In this approach an exclusivity graph $G$ 
with vertices $V(G)$ and edges $E(G)$ specifies the exclusivity relations in a 
contextuality scenario.
The vertices represent events and two events are exclusive if they are 
connected by an edge.
The cliques of the graph form the contexts of the scenario.
(In Appendix~\ref{app:basicgt} we give definitions of essential terms from 
graph theory.)
Recall that an event is a class of outcomes in an experiment and two events are 
exclusive if they cannot be obtained simultaneously in any experiment.
We consider now two types of models implementing the exclusivity graph, quantum 
models and noncontextual hidden variable models.

In a quantum model of the exclusivity graph $G$ one assigns projectors $\Pi_k$ 
to each event $k$ such that $\sum_{k\in C}\Pi_k$ is again a projector for every 
context $C$.
This is equivalent to having $\Pi_k\Pi_l=0$ for any two exclusive events $k$ 
and $l$.
With such an assignment and a quantum state $\rho$ one obtains the probability 
for the event $k$ as
\begin{equation}
 P_\mathrm{QT}(k)= \tr(\rho \Pi_k).
\end{equation}
The set of all probability assignments $P_\mathrm{QT}$ that can be reached with 
some projectors $(\Pi_k)_k$ and some state $\rho$ is a convex set which 
coincides\cite{cabello2014graphtheoretic} with the theta body $\thb(G)$ of the 
graph $G$.

In contrast, in a NCHV model for the exclusivity graph $G$ the events are 
predetermined by a hidden variable $\lambda\in \Lambda$.
That is, to each event $k$ one associates a response function $R_k\colon 
\Lambda\to \set{0,1}$.
For a context $C$ the function $\lambda\mapsto \sum_{k\in C}R_k(\lambda)$ has 
to be again a response function, which is equivalent to 
$R_k(\lambda)R_l(\lambda)=0$ for all $\lambda$ and any pair of exclusive events 
$k$ and $l$.
The probability of an event $k$ is now given by
\begin{equation}
 P_\mathrm{NCHV}(k)= \sum_{\lambda\in \Lambda} \mu(\lambda)R_k(\lambda),
\end{equation}
where $\mu$ is some probability distribution over the hidden variable space 
$\Lambda$.
The set of all probability assignments $P_\mathrm{NCHV}$ that can be reached 
with some response functions $(R_k)_k$ and some distribution $\mu$ forms a 
polytope which can be shown \cite{cabello2014graphtheoretic} to be the stable 
set $\stabb(G)$ of the graph $G$.

Quantum models and NCHV models are both noncontextual in the sense that the 
computation of the probability $P(k)$ of an event $k$ does not depend on the 
context in which $k$ is contained.
Quantum contextuality occurs now for an exclusivity graph $G$ if we can find a 
quantum model with probability assignment $P_\mathrm{QT}$ which cannot be 
achieved by any NCHV model and hence 
$P_\mathrm{QT}\in\thb(G)\setminus\stabb(G)$.
Since $\stabb(G)$ is convex, it is possible to find nonnegative numbers 
$(w_k)_{k\in V(G)}\equiv \vec w$ such that
\begin{equation}
 I_{\vec w}\colon P\mapsto \sum_k w_k P(k)
\end{equation}
separates all NCHV models from some of the quantum models.
That is, there exists some $\alpha$, such that $I_{\vec w}(P_\mathrm{NCHV})\le 
\alpha$ holds for any $P_\mathrm{NCHV}\in \stabb(G)$, while $I_{\vec 
w}(P_\mathrm{QT}) > \alpha$ holds true for some $P_\mathrm{QT}\in \thb(G)$.
This can be further formalized by realizing that the weighted independence 
number \cite{grotschel1984polynomial} $\alpha(G,\vec w)$ is exactly the maximal 
value that $I_{\vec w}$ attains within $\stabb(G)$ and similarly that the 
weighted Lovász number \cite{lovasz1979shannon} $\vartheta(G,\vec w)$ is 
exactly the maximum of $I_{\vec w}$ over $\thb(G)$.
Consequently the inequality $I_{\vec w}(P_\mathrm{NCHV})\le \alpha(G,\vec w)$ 
holds for all NCHV probability assignments and this inequality is violated by 
some quantum probability assignment if and only if 
\cite{cabello2014graphtheoretic} $\vartheta(G,\vec w)>\alpha(G,\vec w)$ holds.
In addition, one can show \cite{cabello2014graphtheoretic} that the value of 
$\vartheta(G,\vec w)$ can always be attained for some quantum model employing 
only rank-one projectors.

\section{State-independent contextuality and nonunit rank}\label{sec:sic}

\begin{figure}
\centering
\includegraphics[width=0.8\textwidth]{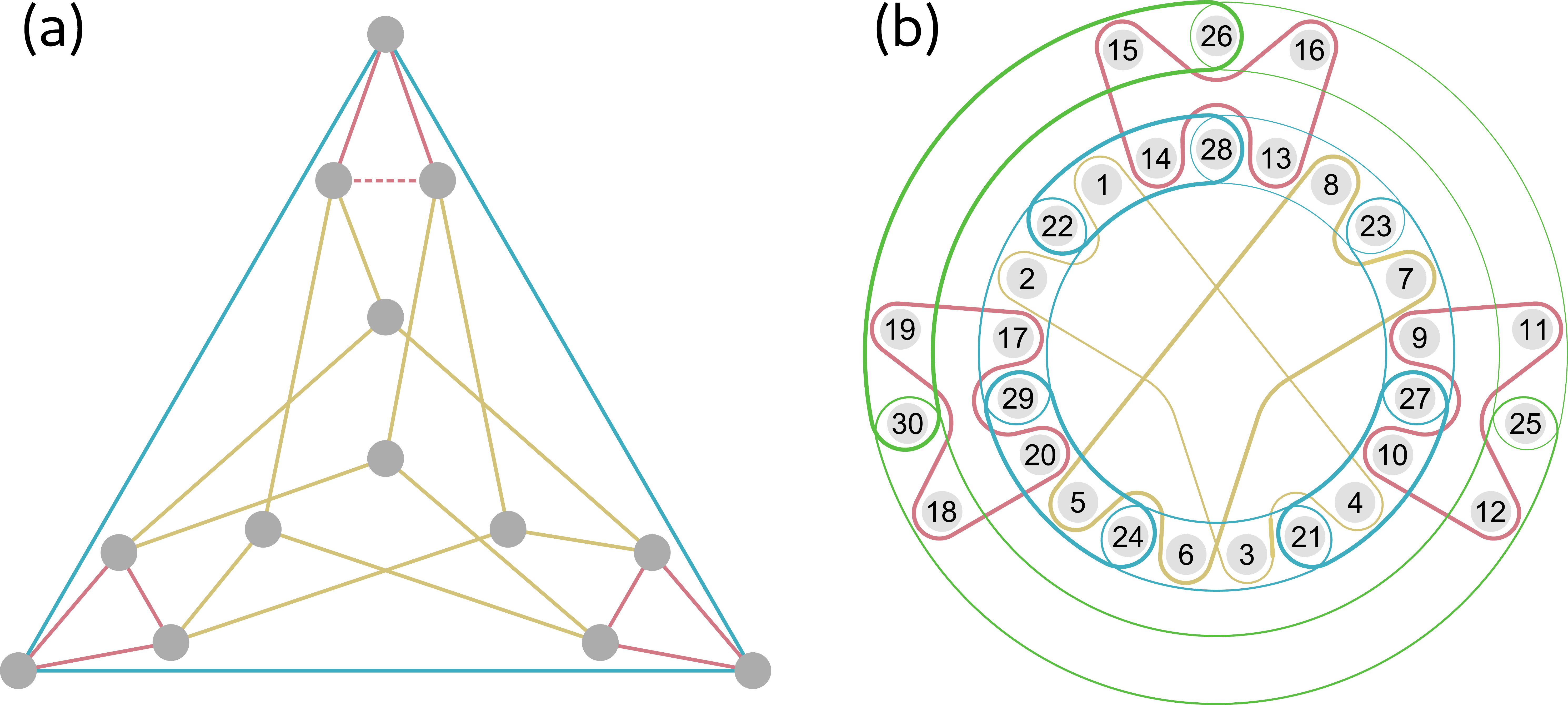}
\caption{\label{fig:g30}%
(a)~The graph $G_\mathrm{YO}$ with 13 vertices and 24 edges.
Removing the dashed edge yields the graph $G'_\mathrm{YO}$. (b)~Representation 
of the graph $G_\mathrm{Toh}$. Any subset of the vertices $1,2,\dotsc,30$ 
enclosed by a line forms a clique, that is, all vertices in any of the sets
$\set{1,2,3,4}$,
$\set{5,6,7,8}$,
$\set{9,10,11,12}$,
$\set{13,14,15,16}$,
$\set{17,18,19,20}$,
$\set{28,14,1,22}$,
$\set{22,2,17,29}$,
$\set{29,20,5,24}$,
$\set{24,6,3,21}$,
$\set{21,4,10,27}$,
$\set{27,9,7,23}$,
$\set{23,8,13,28}$,
$\set{26,15,19,30}$,
$\set{30,18,12,25}$, or
$\set{25,11,16,26}$
are mutually connected by an edge.}
\end{figure}

The discussion so far concerns quantum models as being specified by the 
projectors assigned to each event together with a quantum state.
In SIC one removes the quantum state from the specification of a quantum model 
and instead requires that probabilities from the quantum model cannot be 
reproduced by a NCHV model, independent of the quantum state.
Therefore we consider the set of probability assignments formed by all quantum 
states and fixed projectors $(\Pi_k)_k$,
\begin{equation}
 \mathcal P_\mathrm{SIC}=\set{ P \colon k\mapsto \tr(\rho\Pi_k) | \rho
 \text{ is a quantum state} }.
\end{equation}
This set is also convex, since $P$ is linear and the set of quantum states is 
convex.
Hence, in the case of SIC it is again possible to find nonnegative numbers 
$(w_k)_k\equiv \vec w$ such that $I_{\vec w}$ separates $\stabb(G)$ from 
$\mathcal P_\mathrm{SIC}$.
Therefore, it holds that
\begin{equation}
 \sum_k w_k \tr(\rho \Pi_k) > \alpha(G,\vec w), \text{ for all } \rho,
\end{equation}
or, equivalently, that the eigenvalues of
\begin{equation}
 \sum_k w_k \Pi_k -\alpha(G,\vec w)
\end{equation}
are all strictly positive.

We say that the projectors $(\Pi_k)_k$ of a quantum model of $G$ form a 
rank-$\vec r$ projective representation\footnote{%
An projective representation obeys $\Pi_k\Pi_l=0$ if $[k,l]\in E(G)$.
In contrast, an orthogonal representation obeys $\braket{\psi_k|\psi_l}=0$ if 
$[k,l]\in E(\overline G)$.%
} of $G$, when $\vec r=(r_k)_{k\in V(G)}$ with $r_k$ the rank of $\Pi_k$.
The smallest known contextuality scenario which allows SIC is given by the 
exclusivity graph $G_\mathrm{YO}$ with 13 vertices 
\cite{yu2012stateindependent}. This graph is shown in Figure~\ref{fig:g30}~(a).
For this scenario it is sufficient to consider rank-one projective 
representations.
It also has been shown that no exclusivity graph with 12 or less vertices 
allows SIC \cite{cabello2016quantum}, provided that all projectors are of rank 
one, $\vec r=\vec 1$.
But this does not yet show that SIC requires 13 projectors, since it is 
possible that a contextuality scenario features SIC only if some of the 
projectors are of nonunit rank.

This rises the question whether projectors of nonunit rank can be of advantage 
regarding SIC.
We now show that this is the case by analyzing the exclusivity graph 
$G_\mathrm{Toh}$ with 30 vertices \cite{toh2013stateindependent}.
This graph is shown in Figure~\ref{fig:g30}~(b).
One can find a rank-two projective representation of this graph 
\cite{toh2013stateindependent}, such that $\sum_k \Pi_k = 7+\frac12$.
Since the independence number of $G_\mathrm{Toh}$ is $7$, that is, 
$\alpha(G_\mathrm{Toh})\equiv \alpha(G_\mathrm{Toh}, \vec 1) = 7$, this shows 
that rank two is sufficient for SIC in this scenario.

For necessity, we show that no rank-one projective representation featuring SIC 
of $G_\mathrm{Toh}$ exists.
We first note that such a representation would be necessarily constructed in a 
four-dimensional Hilbert space.
This is the case because the largest clique of $G_\mathrm{Toh}$ has size four 
and hence any projective representation must contain at least four mutually 
orthogonal projectors of rank one.
For an upper bound on the dimension $d$ of any projective representation 
featuring SIC we use the result \cite{ramanathan2014necessary, 
cabello2015necessary}
\begin{equation}\label{e:ramanathan}
 d < \chif(G),
\end{equation}
where $\chif(G)$ denotes the fractional chromatic number of $G$.
One finds $\chif(G_\mathrm{Toh})= 4+\frac27$ implying $d\le 4$.
We do not find any rank-one projective representation of $G_\mathrm{Toh}$ in 
dimension $d=4$ using the numerical methods discussed in 
Section~\ref{sec:seesaw} and in Appendix~\ref{app:gtoh} we prove also 
analytically that no such representation exists.

\section{Graph approach for projective representations of arbitrary 
rank}\label{sec:rank}

The example of the previous section showed that considering projective 
representations of nonunit rank can be necessary for the existence of a quantum 
model with SIC.
Since the case of rank-one has already been analyzed in detail, it is helpful 
to reduce the case of nonunit rank to the case of rank one.
To this end we adapt the notation \cite{schrijver2004fractional} $G^{\vec r}$ 
for the graph where each vertex $k$ is replaced by a clique $C_k$ of size $r_k$ 
and all vertices between two cliques $C_k$ and $C_\ell$ are connected when 
$[k,\ell]$ is an edge.
See Figure~\ref{fig:gsp} for an illustration.
That is,
\begin{align}
 V(G^{\vec r})&=\set{ (k,i) | k\in V(G),\; i=1,2,\dotsc, r_k}, \\
 E(G^{\vec r})&=\set{[(k,i),(\ell,j)]| [k,\ell]\in E(G) \text{ or } (k=\ell 
 \text{ and } i\ne j)}.
\end{align}
\begin{figure}
\centering
\includegraphics[width=0.5\textwidth]{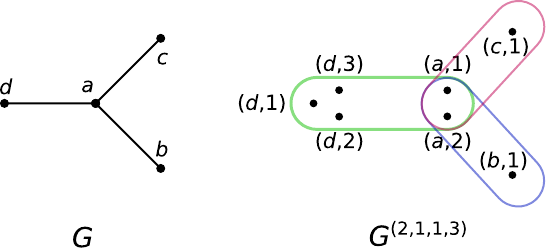}
\caption{\label{fig:gsp}%
Illustration of the graph $G^{\vec r}$.
$G$ has vertices $a,b,c,d$ and here $r_a=2$, $r_b=1$, $r_c=1$, $r_d=3$.
In the product graph, vertices enclosed by a line form a clique, that is, they 
are all mutually connected by an edge.
}
\end{figure}
The construction of $G^{\vec r}$ is such that if $(\Pi_{k,i})_{k,i}$ is a 
rank-one projective representation of $G^{\vec r}$, then evidently $\Pi_k= 
\sum_i \Pi_{k,i}$ defines a rank-$\vec r$ projective representation of $G$.
Vice versa, if $(\Pi_k)_k$ is a rank-$\vec r$ projective representation of $G$, 
then one can immediately construct a rank-one projective representation of 
$G^{\vec r}$ by decomposing each projector $\Pi_k$ into rank-one projectors 
$(\Pi_{k,i})_i$ such that $\Pi_k=\sum_i \Pi_{k,i}$.

For a given graph $G$ we denote by $d_\pi(G,\vec r)$ the minimal dimension 
which admits a rank-$\vec r$ projective representation and by $\chif(G,\vec 
r)$ the fractional chromatic number for the graph $G$ with vertex weights $\vec 
r\in \naturals^{\abs{V(G)}}$.
In addition we abbreviate the Lovász function of the complement graph by 
$\overline\vartheta(G,\vec r)= \vartheta(\overline G,\vec r)$.
For these three functions we omit the second argument if $r_k=1$ for all $k$, 
that is, $\chif(G)\equiv \chif(G,\vec 1)$, etc.
\begin{theorem}\label{thm:equalfu}
For any graph $G$ and vertex weights $\vec r\in \naturals^{\abs{V(G)}}$ we have 
$d_\pi(G^{\vec r})= d_\pi(G,\vec r)$,
$\chif(G^{\vec r})= \chif(G,\vec r)$, and
$\overline\vartheta(G^{\vec r})= \overline\vartheta(G,\vec r)$.
In addition,
$\chif(G,m \vec r)=m \chif(G,\vec r)$ and
$\overline\vartheta(G,m \vec r)=m \overline\vartheta(G,\vec r)$
hold for any $m\in \naturals$.
\end{theorem}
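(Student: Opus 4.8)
The plan is to establish the three equalities $d_\pi(G^{\vec r})=d_\pi(G,\vec r)$, $\chif(G^{\vec r})=\chif(G,\vec r)$ and $\overline\vartheta(G^{\vec r})=\overline\vartheta(G,\vec r)$ one at a time, and then read off the two scaling identities from homogeneity. The dimension equality is essentially already contained in the two constructions described just before the statement: summing a rank-one projective representation $(\Pi_{k,i})$ of $G^{\vec r}$ over the copies of each vertex gives a rank-$\vec r$ representation of $G$ in the same Hilbert space, and decomposing each $\Pi_k$ into rank-one projectors gives a rank-one representation of $G^{\vec r}$ in the same space. First I would check that both maps preserve the defining relation $\Pi_k\Pi_l=0$ for edges: inside one block the copies are mutually orthogonal because $(k,i)\sim(k,j)$ in $G^{\vec r}$, and across an edge $[k,\ell]\in E(G)$ every cross term vanishes. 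Since the two constructions are mutually inverse at the level of attainable dimensions, $d_\pi(G^{\vec r})=d_\pi(G,\vec r)$ follows with no further work.

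For the fractional chromatic number I would use the covering linear program $\chif(G,\vec r)=\min\sum_I y_I$ subject to $\sum_{I\ni k}y_I\ge r_k$ and $y_I\ge0$, where $I$ ranges over the independent sets of $G$. The combinatorial key is that every independent set of $G^{\vec r}$ has the form $\set{(k,\sigma(k)) | k\in I}$ for a unique independent set $I$ of $G$ and a choice $\sigma$ of one copy per vertex of $I$; write $\pi$ for the projection $(k,i)\mapsto k$. Given a cover $(y_I)$ of $(G,\vec r)$, I would spread $y_I$ uniformly over the $\prod_{k\in I}r_k$ choices $\sigma$, so that each copy $(k,i)$ receives $\sum_{I\ni k}y_I/r_k\ge1$ while the total weight is preserved, yielding $\chif(G^{\vec r})\le\chif(G,\vec r)$. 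Conversely, from a cover $(y'_S)$ of $G^{\vec r}$ I would set $y_I=\sum_{S:\,\pi(S)=I}y'_S$; since for fixed $k$ the families $\set{S\ni(k,i)}$ are disjoint over $i$ and together exhaust $\set{S | k\in\pi(S)}$, the demand at $k$ adds up to at least $r_k$, giving the reverse inequality.

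For $\overline\vartheta$ the crucial step is to identify the complement. Two distinct vertices $(k,i)$ and $(\ell,j)$ are nonadjacent in $G^{\vec r}$ exactly when $k=\ell$, or $k\ne\ell$ and $[k,\ell]\in E(\overline G)$; hence $\overline{G^{\vec r}}$ is $\overline G$ with each vertex blown up into an independent set of size $r_k$, adjacent blocks being completely joined. Using the rank-one orthonormal-representation form recalled above, namely $\vartheta(H,\vec w)=\max\sum_k w_k\abs{\braket{\psi|v_k}}^2$ over a unit handle $\psi$ and unit vectors $v_k$ with $\braket{v_k|v_\ell}=0$ whenever $k\sim_H\ell$, I would prove both bounds directly. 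Assigning every copy $(k,i)$ the same optimal vector $v_k$ of $\overline G$ is feasible for $\overline{G^{\vec r}}$ and attains $\overline\vartheta(G,\vec r)$; conversely, from an optimizer of $\overline{G^{\vec r}}$ one keeps for each $k$ only the copy with the largest overlap $\abs{\braket{\psi|v_{k,i}}}^2$, which is feasible for $\overline G$ and shows the value is at most $\overline\vartheta(G,\vec r)$. What makes this work is that copies of the same vertex carry no mutual orthogonality constraint in $\overline{G^{\vec r}}$, so duplicating or discarding copies is harmless.

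Finally, the two scaling identities are homogeneity statements. For $\chif$, multiplying a feasible cover by $m$ turns a cover of $(G,\vec r)$ into one of $(G,m\vec r)$, and dividing by $m$ reverses this, so $\chif(G,m\vec r)=m\chif(G,\vec r)$. For $\overline\vartheta$, note that $\overline\vartheta(G,\vec r)=\max_{P\in\thb(\overline G)}\sum_k r_k P(k)$ is the support function of the fixed convex body $\thb(\overline G)$ and is therefore positively homogeneous of degree one in the weights, giving $\overline\vartheta(G,m\vec r)=m\overline\vartheta(G,\vec r)$. I expect the main obstacle to be bookkeeping rather than conceptual: arranging the weight-spreading in the $\chif$ argument so that it preserves the total while meeting every copy's demand, and keeping the orthogonality conventions straight so that the complement identification feeds the orthonormal-representation bounds in the correct direction.
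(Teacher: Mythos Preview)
Your plan is correct and for parts (i), (iii), (iv), (v) it is essentially the paper's argument. For (iii) the paper phrases the duplication/selection trick using general projective representations and a state $\rho$, but the content is the same as your rank-one version. One slip to fix: you write that distinct $(k,i),(\ell,j)$ are ``nonadjacent in $G^{\vec r}$ exactly when $k=\ell$, or $k\ne\ell$ and $[k,\ell]\in E(\overline G)$''. The first clause is wrong --- distinct copies $(k,i),(k,j)$ are \emph{adjacent} in $G^{\vec r}$ --- but your very next sentence (each vertex blown up into an independent set in $\overline{G^{\vec r}}$) and your later remark that copies carry no mutual orthogonality constraint are both correct, so the argument itself goes through.

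The only genuine methodological difference is in (ii). The paper does not manipulate the LP directly; it first observes that $\chi(G^{\vec r})=\chi(G,\vec r)$ for the integer chromatic number (by the same clique-splitting/merging argument used for $d_\pi$, with colorings in place of projectors) and then invokes $\chif(G,\vec r)=\min_{b\in\naturals}\chi(G,b\vec r)/b$. Your route, spreading a fractional cover of $(G,\vec r)$ uniformly over copy choices and aggregating in the reverse direction via the projection $\pi$, works directly on the LP and is self-contained. The paper's path is shorter if one grants the $\min_b$ formula, while yours makes the independent-set correspondence $S\leftrightarrow(\pi(S),\sigma)$ explicit and avoids the detour through the integer program; both buy the same equality with comparable effort.
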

\noindent
The proof is provided in Appendix~\ref{app:equalfu}.
As a consequence we extend the relation \cite{lovasz1979shannon} 
$\overline\vartheta(G)\le d_\pi(G)$ (see also Appendix~\ref{app:basicgt}) to 
the case of nonunit rank,
\begin{equation}\label{eq:tdx}
 \overline\vartheta(G,\vec{r}) \leq d_\pi(G,\vec{r}).
\end{equation}
Similarly we have generalization of the condition in Eq.~\eqref{e:ramanathan}:
Whenever a graph $G$ has a rank-$\vec r$ projective representation featuring 
SIC, then it holds that
\begin{equation}\label{eq:dchi}
 d_\pi(G,\vec r)<\chif(G,\vec r).
\end{equation}

~

Following the ideas from Refs.~\onlinecite{ramanathan2014necessary, 
mancinska2016quantum}, we consider quantum models that use the maximally mixed 
state $\rho_\mathrm{mm}=\id/d$, where $d$ is the dimension of the Hilbert 
space.
For a rank-$\vec r$ projective representation, the corresponding probability 
assignment is then simply given by
\begin{equation}
 P_\mathrm{mm}(k)= r_k/d.
\end{equation}
If the representation features SIC, then $P_\mathrm{mm}\notin \stabb(G)$, 
since, by definition, $P_\mathrm{mm}\in \mathcal P_\mathrm{SIC}$ while 
$\mathcal P_\mathrm{SIC}$ and $\stabb(G)$ are disjoint sets.
This is the motivation to define the set $\rankb(G)$ of all probability 
assignments $P_\mathrm{mm}$ which arise from any projective representation of 
$G$.
That is,
\begin{equation}
 \rankb(G) =\set{\vec r/\ell |
   \vec r\in \naturals^{\abs{V(G)}} \text{, } \ell\in \naturals, \text{ such 
that } \ell\ge d_\pi(G,\vec r)}.
\end{equation}
Denoting by $\overline{\rankb}(G)$ the topological closure of $\rankb(G)$ we 
show in Appendix~\ref{app:subset} the following inclusions.
\begin{theorem}\label{thm:subset}
For any graph $G$, the set $\overline{\rankb}(G)$ is convex and
$\stabb(G) \subseteq \overline{\rankb}(G) \subseteq \thb(G)$.
\end{theorem}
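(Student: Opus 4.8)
The plan is to prove the three claims about $\overline{\rankb}(G)$ in the natural order: first convexity, then the lower inclusion $\stabb(G)\subseteq\overline{\rankb}(G)$, and finally the upper inclusion $\overline{\rankb}(G)\subseteq\thb(G)$.

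\paragraph*{Convexity.}
I would first show that $\rankb(G)$ is closed under a natural ``mixing'' operation that becomes ordinary convex combination in the closure. Take two points $\vec r/\ell$ and $\vec s/m$ of $\rankb(G)$, coming from projective representations in dimensions $\ell\ge d_\pi(G,\vec r)$ and $m\ge d_\pi(G,\vec s)$. The key observation is that by padding each representation with extra mutually orthogonal dimensions we may assume $\ell$ and $m$ are as large as we like without changing the point, since $\vec r/\ell=(t\vec r)/(t\ell)$ and a rank-$t\vec r$ representation in dimension $t\ell$ always exists (take $t$ orthogonal copies). Then, to realize a rational convex combination $\lambda\,\vec r/\ell+(1-\lambda)\,\vec s/m$ with $\lambda=p/q$, I would rescale both representations to a common denominator $N$ and form their direct sum $\Pi_k\oplus\Pi'_k$ acting on the direct sum Hilbert space; this is again a projective representation because orthogonality is preserved block-diagonally, and its normalized rank vector is exactly the desired combination. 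This shows $\rankb(G)$ contains all rational convex combinations of its points, and taking the topological closure upgrades this to full convexity of $\overline{\rankb}(G)$.

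\paragraph*{Lower inclusion.}
For $\stabb(G)\subseteq\overline{\rankb}(G)$, I would use that $\stabb(G)$ is a polytope whose vertices are the indicator vectors $\vec 1_S$ of independent sets $S\subseteq V(G)$ (this is the standard description of the stable-set polytope). By the convexity just established it suffices to show each such vertex lies in $\overline{\rankb}(G)$. For an independent set $S$, assign $\Pi_k=\proj{\psi}$ for a single fixed unit vector $\ket\psi$ to every $k\in S$ and $\Pi_k=0$ otherwise; since no two vertices of $S$ are adjacent, there is no orthogonality constraint to violate, so this is a valid projective representation in dimension $d=1$ with rank vector $\vec r=\vec 1_S$. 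Hence $\vec 1_S=\vec r/1\in\rankb(G)$, and convexity finishes the inclusion.

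\paragraph*{Upper inclusion.}
This is the step I expect to be the main obstacle, since it must connect the combinatorial/dimensional definition of $\rankb(G)$ to the semidefinite object $\thb(G)$. The natural route is through the Lovász-number characterization: $\thb(G)$ is exactly the set of $P$ with $I_{\vec w}(P)\le\vartheta(G,\vec w)$ for all weights, equivalently $P(k)=\tr(\rho\Pi_k)$ for some state and some rank-one representation. Given $\vec r/\ell\in\rankb(G)$ with $\ell\ge d_\pi(G,\vec r)$, there is a rank-$\vec r$ projective representation $(\Pi_k)_k$ in some dimension $d\le\ell$; taking the maximally mixed state on a further padded $\ell$-dimensional space gives $P_\mathrm{mm}(k)=r_k/\ell=\tr(\rho_\mathrm{mm}\Pi_k)$, which is a bona fide quantum probability assignment and therefore lies in $\thb(G)$. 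Thus every point of $\rankb(G)$ is in the closed set $\thb(G)$, and taking closures preserves the inclusion. The care needed here is to ensure the padding does not disturb the value $r_k/\ell$ and that $\thb(G)$ is indeed closed so that $\overline{\rankb}(G)\subseteq\thb(G)$ follows; both hold because $\thb(G)$ is a closed convex body containing all quantum assignments of $G$.
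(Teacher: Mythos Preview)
Your proposal is correct and follows essentially the same approach as the paper's own proof in Appendix~\ref{app:subset}. All three parts match: convexity via direct sums of projective representations, the lower inclusion via indicator vectors of independent sets realized as trivial PRs, and the upper inclusion via the maximally mixed state together with the fact that $\thb(G)$ is closed.

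The only cosmetic difference is in the convexity step. The paper produces the midpoint $(\vec p+\vec p\,')/2$ explicitly by the construction $\Gamma_k=(\id_{d'}\otimes\Pi_k)\oplus(\id_d\otimes\Pi'_k)$ in dimension $2dd'$, then iterates to obtain all dyadic combinations and passes to the closure. You instead aim directly at an arbitrary rational coefficient $\lambda=p/q$ by taking suitably many copies of each representation before forming the direct sum. Both arguments rest on the same block-diagonal construction; yours is slightly more direct but your phrase ``rescale both representations to a common denominator $N$'' would benefit from being spelled out (e.g., after bringing both points to denominator $N$ via $\id_t\otimes\Pi_k$, take $p$ copies of the first PR and $q-p$ copies of the second and form their direct sum in dimension $qN$). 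For the lower inclusion the paper writes the PR as $(a_v\,\id_d)_v$ for arbitrary $d$, whereas you take $d=1$; these are the same construction.
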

\noindent
This implies that any NCHV probability assignment can be arbitrarily well 
approximated by a quantum probability assignment using the maximally mixed state.
Conversely, if $\rankb(G)\subset \stabb(G)$ for an exclusivity graph $G$, then 
any quantum probability assignment using the maximally mixed state can 
be reproduced by a NCHV model and hence no projective representation of $G$ can 
feature SIC.
This is the case for all graphs with at most 8 vertices, as we show
in Appendix~\ref{app:stabisrank} by using a linear relaxation of $\rankb(G)$.
\begin{theorem}\label{thm:stabisrank}
$\stabb(G) = \overline{\rankb}(G)$ for any graph $G$ with $8$ vertices or less.
\end{theorem}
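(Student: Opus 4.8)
The inclusion $\stabb(G)\subseteq\overline{\rankb}(G)$ is already supplied by Theorem~\ref{thm:subset}, so the plan is to establish only the reverse inclusion $\overline{\rankb}(G)\subseteq\stabb(G)$ whenever $\abs{V(G)}\le 8$. My first move is a reduction that eliminates the dimension function $d_\pi$ from the problem. Since $\stabb(G)$ is down-closed and, by linear-programming duality for the weighted fractional chromatic number, a rational point $\vec r/\ell$ lies in $\stabb(G)$ exactly when $\ell\ge\chif(G,\vec r)$, and since every point of $\rankb(G)$ has the form $\vec r/\ell$ with $\ell\ge d_\pi(G,\vec r)$, the desired inclusion is equivalent to the single family of lower bounds
\begin{equation}
 d_\pi(G,\vec r)\ge \chif(G,\vec r) \quad\text{for all } \vec r\in\naturals^{\abs{V(G)}} .
\end{equation}
I would prove the theorem by certifying precisely this inequality for every graph on at most eight vertices; this is also the point where the bound $8$ is essential, as the inequality must fail for the larger graphs that do admit SIC.

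Because $\chif(G,\vec r)$ is, for fixed $G$, the pointwise maximum of finitely many linear functionals $\vec r\mapsto\langle\vec c,\vec r\rangle/\alpha(G,\vec c)$, the reduction becomes a finite list of valid linear inequalities $\langle\vec c,\vec r\rangle\le\alpha(G,\vec c)\,\ell$ for $\rankb(G)$ — these are exactly the facet inequalities of $\stabb(G)$ and constitute the linear relaxation I want to match. The clique facets are immediate: projectors sharing a clique have mutually orthogonal ranges, so their ranks add and are bounded by $\ell$. For perfect $G$ these exhaust the facets, so the whole difficulty sits in the non-clique facets, which for $\abs{V(G)}\le 8$ arise (by the strong perfect graph theorem) only from induced copies of $C_5$, $C_7$ and $\overline{C_7}$.

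The heart of the proof, and the step I expect to be hardest, is the rank-validity of the odd-hole inequality: for an induced $C_{2k+1}$ with ranges $V_1,\dots,V_{2k+1}$ satisfying $V_i\perp V_{i+1}$ cyclically, one must show $\sum_i r_i\le k\ell$. The Lovász bound $\overline\vartheta(G,\vec r)\le d_\pi(G,\vec r)$ of Eq.~\eqref{eq:tdx} is provably too weak here (for $C_5$ it only yields $\sqrt5\,r$ in place of the required $\tfrac52 r$), so a genuinely combinatorial frustration argument is needed. For $C_5$ I would argue as follows: $V_1,V_3\subseteq V_2^\perp$ gives $\dim(V_1\cap V_3)\ge r_1+r_2+r_3-\ell$, and likewise $\dim(V_3\cap V_5)\ge r_3+r_4+r_5-\ell$; both intersections lie inside $V_3$, while their mutual intersection lies in $V_1\cap V_5=\set{0}$ because $V_1\perp V_5$. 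Hence $(r_1+r_2+r_3-\ell)+(r_3+r_4+r_5-\ell)\le\dim V_3=r_3$, which rearranges to $\sum_i r_i\le 2\ell$. I expect the same scheme — intersecting the ranges along the odd-indexed vertices and closing the chain with the unique edge that an odd cycle forces between its endpoints — to deliver $\sum_i r_i\le k\ell$ for every odd hole, and an analogous subspace-counting argument to dispatch the odd-antihole inequality for $\overline{C_7}$.

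With these rank-valid inequalities in hand, the final step is an exhaustive pass over all graphs with at most eight vertices: for each graph I would compute the facets of $\stabb(G)$, check that each is covered by a clique, odd-hole or odd-antihole inequality (or a lift thereof inheriting rank-validity), and conclude that the relaxation cut out by the proven inequalities coincides with $\stabb(G)$; together with Theorem~\ref{thm:subset} this yields $\stabb(G)=\overline{\rankb}(G)$. The main obstacle is the clean generalization of the $C_5$ frustration argument to all odd holes and antiholes and, more delicately, to their lifted facets in the larger graphs — the reduction to $d_\pi(G,\vec r)\ge\chif(G,\vec r)$ is what makes it plausible that a single short finite verification then closes the argument.
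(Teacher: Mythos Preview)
Your proposal is correct in spirit and very close to what the paper actually does, but the packaging differs in a way worth noting. Your reduction $\overline{\rankb}(G)\subseteq\stabb(G)\iff d_\pi(G,\vec r)\ge\chif(G,\vec r)$ is clean and correct, and your $C_5$ frustration argument via $\dim(V_1\cap V_3)$ and $\dim(V_3\cap V_5)$ is \emph{exactly} the argument the paper uses for odd cycles. The paper, however, does not attempt to classify the facets of $\stabb(G)$ and prove each type rank-valid by hand. Instead it introduces, for every independent set $I$ of $G$, an auxiliary variable $\overline{\dim}(\Pi_I^s)$ for the normalized dimension of the intersection $\bigcap_{i\in I}\Pi_i^s$, writes down the obvious linear constraints among these (submodularity of $\dim$, orthogonality when $I_1\cup I_2$ fails to be independent, etc.), and defines $\lrankb(G)$ as the projection of this lifted polytope onto the singleton variables. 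The computer check is then simply that $\lrankb(G)=\stabb(G)$ for every graph on at most eight vertices, with no need to know what the facets look like. This buys exactly what you identify as your main obstacle: your plan requires knowing that every facet of $\stabb(G)$ for $\abs{V(G)}\le 8$ is a clique, odd-hole, odd-antihole, or a lift thereof whose rank-validity you can derive---and that classification is nontrivial and may fail---whereas the paper's uniform relaxation handles whatever facets arise automatically, since the auxiliary variables already encode the intersections you would use in any ad hoc proof. Your route is more transparent case by case; the paper's is more robust and more directly mechanizable.
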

\noindent
Since any exclusivity graph allowing SIC must have
$\overline{\rankb}(G)\supsetneq \stabb(G)$, this implies the following statement.
\begin{corollary}\label{cor:min9}
Any scenario allowing SIC requires more than 8 events.
\end{corollary}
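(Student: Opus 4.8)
The plan is to prove the corollary by contraposition, reading it off directly from Theorem~\ref{thm:stabisrank}. I would suppose that $G$ is an exclusivity graph with at most $8$ vertices and show that no projective representation of $G$ can feature SIC; the contrapositive then yields that any scenario allowing SIC must have more than $8$ events.

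First I would invoke the link between SIC and the maximally mixed state established just above the corollary. If some rank-$\vec r$ projective representation of $G$ featured SIC, then the probability assignment $P_\mathrm{mm}\colon k\mapsto r_k/d$ produced by $\rho_\mathrm{mm}=\id/d$ lies in $\mathcal P_\mathrm{SIC}$. Since SIC means that $\mathcal P_\mathrm{SIC}$ is disjoint from $\stabb(G)$, it would follow that $P_\mathrm{mm}\notin \stabb(G)$. On the other hand $P_\mathrm{mm}=\vec r/d$ is by construction an element of $\rankb(G)\subseteq \overline{\rankb}(G)$, so this single point would witness $\overline{\rankb}(G)\ne \stabb(G)$; combined with the inclusion $\stabb(G)\subseteq \overline{\rankb}(G)$ supplied by Theorem~\ref{thm:subset}, I would conclude $\overline{\rankb}(G)\supsetneq \stabb(G)$.

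The contradiction then closes the argument: Theorem~\ref{thm:stabisrank} asserts that $\stabb(G)=\overline{\rankb}(G)$ whenever $G$ has $8$ vertices or fewer. Hence for every such $G$ no representation can feature SIC, which is precisely the claimed statement.

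The decisive work does not reside in the corollary itself, whose proof is this one-line contradiction, but in Theorem~\ref{thm:stabisrank}. The hard part is therefore the equality $\stabb(G)=\overline{\rankb}(G)$ for small graphs, which I would (as the appendix does) attack through a linear relaxation of $\rankb(G)$ together with an exhaustive check over all graphs on at most $8$ vertices; that verification, rather than the logical deduction of the corollary, is where any genuine difficulty lies.
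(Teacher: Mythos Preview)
Your argument is correct and mirrors the paper's own reasoning: the text immediately preceding the corollary states that any exclusivity graph allowing SIC must have $\overline{\rankb}(G)\supsetneq \stabb(G)$, and the corollary is then deduced directly from Theorem~\ref{thm:stabisrank}. Your proposal simply unpacks this implication in slightly more detail, and your remark that the substantive work lies in Theorem~\ref{thm:stabisrank} is also how the paper frames it.
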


\section{Minimal State-Independent Contextuality}\label{sec:exhaustive}

We now aim to find the smallest scenario allowing SIC, that is, the smallest 
exclusivity graph which has a projective representation featuring SIC.
Here, we say that a graph $G'$ is smaller than a graph $G$ if either $G'$ has 
less vertices than $G$ or if both have the same number of vertices and $G'$ has 
less edges than $G$.
With this notion, the smallest known graph allowing SIC is $G'_\mathrm{YO}$ 
with 13 vertices and 23 edges,\footnote{%
In fact, $G'_\mathrm{YO}$ has the same rank-one projective representation as 
$G_\mathrm{YO}$ and one can verify that the corresponding set $\mathcal{P}_{\rm SIC}$ is 
disjoined from $\stabb(G'_{\rm YO})$.}
where $G'_\mathrm{YO}$ is $G_\mathrm{YO}$ but with one edge removed as shown in 
Figure~\ref{fig:g30}~(a).
Due to Corollary~\ref{cor:min9} it remains to consider the graphs with 9 and up 
to 12 vertices as well as all graphs with 13 vertices and 23 edges or less.

Instead of testing for a projective representation featuring SIC, we use the 
weaker condition in Eq.~\eqref{eq:dchi} and we limit our considerations to 
rank-$r$ representations where all projectors have the same rank and $r=1$, 
$r=2$, or $r=3$.
We now aim to establish the following.
\begin{assertion}\label{as:main}
For $r=1,2,3$, the smallest graph $G$ with $d_\pi(G,r\vec 1)< \chif(G,r\vec 1)$ 
is $G'_\mathrm{YO}$.
\end{assertion}
\noindent
This assertion implies that $G'_\mathrm{YO}$ is the smallest graph admitting 
SIC when considering rank-$r$ projective representations for $r=1,2,3$.

Our approach to Assertion~\ref{as:main} consists of two steps.
First we identify four conditions that are easy to compute and necessary for 
$d_\pi(G,\vec r) <\chif(G, \vec r)$ to hold.
For graphs which satisfy all these conditions and for $\vec r=r\vec 1$ with 
$r=1,2,3$, we then implement a numerical optimization algorithm in order to 
compute $d_\pi(G,r\vec 1)$.
We then confirm Assertion~\ref{as:main}, aside from the uncertainty that is due 
to the numerical optimization.

\subsection{Conditions}\label{sec:conditions}

We introduce four necessary conditions that are satisfied if 
$G$ is the smallest graph with $d_\pi(G,\vec r)<\chif(G,\vec r)$ for some fixed 
$\vec r$.
First, we consider the case where $G$ is not connected.
Then there exists a partition of the vertices $V(G)$ into disjoint subsets 
$V_i\subsetneq V(G)$ such that no two vertices from different subsets are 
connected.
We write $G_i$ for the corresponding induced subgraph and similarly
 $\vec r_i$.
It is easy to see (see Appendix~\ref{app:basicgt}), that $d_\pi(G,\vec r)= 
\max_i d_\pi(G_i, \vec r_i)$ and $\chif(G,\vec r)= \max_i\chif(G_i, \vec r_i)$ 
and hence $d_\pi(G,\vec r)<\chif(G, \vec r)$ implies that already 
$d_\pi(G_i,\vec r_i)<\chif(G_i, \vec r_i)$ for some $i$.
But this is at variance with the assumption that $G$ is minimal.
Hence we have the following.
~\\
~\\
\emph{Condition 1.} $G$ is connected.
\\

Second, we consider a partition of $V(G)$ into disjoint subset $V_i\subsetneq 
V(G)$ such that any two vertices from different subsets are connected.
We have $d_\pi(G,\vec r) = \sum_i d_\pi(G_i,\vec r_i)$ and $\chif(G, \vec r) = 
\sum_i \chif(G_i, \vec r_i)$ (see Appendix~\ref{app:basicgt}) and hence
$d_\pi(G,\vec r)<\chif(G, \vec r)$ implies $d_\pi(G_i,\vec r)<\chif(G_i, \vec 
r)$ for some $i$ and thus $G$ is not minimal.
~\\
~\\
\emph{Condition 2.} $\overline G$ is connected.
\\

Third, we write $G-e$ for the subgraph with the edge $e$ removed.
Clearly, $d_\pi(G-e,\vec r)\le d_\pi(G,\vec r)$.
Thus, if $d_\pi(G,\vec r)<\chif(G,\vec r)$ and $\chif(G, \vec r)= 
\chif(G-e, \vec r)$, then we have already $d_\pi(G-e,\vec r)<\chif(G-e, 
\vec r)$ and $G$ cannot be minimal.
In order to avoid this contradiction, we need the following.
~\\
~\\
\emph{Condition 3.}
$\chif(G, \vec r) \neq \chif(G-e, \vec r)$ for all edges $e$.
\\

Note that if $\vec r= r\vec 1$, then this condition reduces to $r\chif(G)\ne 
r\chif(G-e)$ and is independent of $r$.
We can further sharpen Condition~3 by assuming merely $\ceil{\chif(G,\vec r)}= 
\ceil{\chif(G-e, \vec r)}$, where $\ceil x$ denotes the least integer not 
smaller than $x$.
Then $d_\pi(G,\vec r)<\chif(G, \vec r)$ implies $d_\pi(G-e,\vec r) < 
\ceil{\chif(G-e,\vec r)}$ and since $d_\pi(G-e, \vec r)$ is an integer, this 
also implies $d_\pi(G-e,\vec r) < \chif(G-e, \vec r)$.
~\\
~\\
\emph{Condition 4.}
$\ceil{\chif(G, \vec r)} \neq \ceil{\chif(G-e, \vec r)}$ for all edges $e$. 
\\

Finally, from Eq.~\eqref{eq:tdx} we have $\overline\vartheta(
G,\vec r) \le d_\pi(G,\vec r)$ and since $d_\pi(G,\vec r)$ is an integer, 
we also have $\ceil{\,\overline\vartheta(G,\vec r)}\le d_\pi(G,\vec r)$.
This implies our last condition.
~\\
~\\
\emph{Condition 5.}
$\ceil{\,\overline\vartheta(G, \vec r)} < \chif(G, \vec r)$.
\\

We apply these five conditions 
to all graphs with $n=9,10,11,12$ vertices and all graphs with $n=13$ vertices 
and 23 or less edges.
The resulting numbers of graphs are listed in Table~\ref{tab:graph_conditions}.
First, all nonisomorphic graphs are generated using the software package 
``nauty'' \cite{mckay2014practical}, where then all graphs violating 
Condition~1 or Condition~2 are discarded.
Subsequently, Condition~3 is implemented and for the remaining graphs, 
$\overline\vartheta(G)$, $\chif(G)$, and $\min_e \chif(G-e)$ are computed, 
which then allows us to evaluate Condition~4 and Condition~5 for $\vec r=r\vec 
1$ with $r=1,2,3$.

For the computation of $\chif$, we use a floating point solver for the 
corresponding linear program.
On the basis of the solution of the program, an exact fractional solution is 
guessed and then verified using the strong duality of linear optimization.
The Lovász number $\vartheta$ is computed by means of a floating point solver 
for the corresponding semidefinite program.
The dual and primal solutions are verified and the gap between both is used to 
obtain a strict upper bound on the numerical error.
This error is in practice of the order of $10^{-10}$ or better for the vast 
majority of the graphs.

\begin{table}
\begin{tabular}{c c|r r r r r r}
\hline\hline
~ rank ~ & Condition &
\multicolumn{1}{c}{$n=9$} & \multicolumn{1}{c}{$n=10$} &
\multicolumn{1}{c}{$n=11$} & \multicolumn{1}{c}{$n=12$} &
\multicolumn{1}{c}{$n=13^*$} \\
\hline
any & none &
$274\,668$ & $12\,005\,168$ & $1\,018\,997\,864$ & $165\,091\,172\,592$ & 
$10\,951\,875\,086$\\
any & 1 \& 2 &
$247\,492$ & $11\,427\,974$ & $994\,403\,266$ & $163\,028\,488\,360$ & 
$9\,185\,079\,351$\\
$\vec r= r\vec1$ & 1--3 & $52$ & $608$ & $13\,716$ & $609\,373$ & $16\,893$\\
$r=1$ & 1--4 & $37$ & $283$ & $5\,122$ & $163\,127$ & $15\,596$\\
$r=1$ & 1--5 & $1$ & $11$ & $446$ & $31\,049$ & $77$\\
$r=2$ & 1--4 & $44$ & $398$ & $7\,159$ & $238\,478$ & $15\,691$\\
$r=2$ & 1--5 & $8$ & $126$ & $2\,483$ & $106\,400$ & $172$\\
$r=3$ & 1--4 & $45$ & $430$ & $8\,240$ & $265\,346$ & $15\,865$\\
$r=3$ & 1--5 & $13$ & $158$ & $3\,574$ & $133\,268$ & $346$ \\
\hline\hline
\end{tabular}
\caption{\label{tab:graph_conditions}%
Numbers of graphs satisfying Condition~1--5.
Condition~1--5 are applied to all graphs with $n$ vertices.
For $n=13^*$ vertices, only the graphs with up to 23 edges are considered.
Condition~3 can be applied for the case $\vec r=r\vec 1$ for any $r$ but 
Condition~4 and Condition~5 are evaluated only for the cases $r=1,2,3$.
}
\end{table}

\subsection{Numerical estimate of the dimension}\label{sec:seesaw}
If an exclusivity graph $G$ has a rank-$\vec r$ projective representation with 
SIC, then, according to Theorem~\ref{thm:equalfu} and the subsequent 
discussion, there must be a rank-one projective representation of $G^\vec r$ in 
dimension $d=\lceil\chif(G,\vec r)\rceil-1$.
At this point, we do not further exploit the structure of the problem.
We rather consider methods which allow us to verify or falsify the existence 
of a rank-one projective representation in dimension $d$ of an arbitrary graph 
$G$ with $n$ vertices.

If such a projective representation exists, then one can assign normalized 
vectors $\vec y_k\in \complexes^d$ to each vertex $k\in V(G)$ such that $\vec 
y_\ell^\dag \vec y_k=0$ for all edges $[\ell, k]\in E(G)$.
Collecting these vectors in the columns of a matrix $Y$, we obtain the 
feasibility problem
\begin{equation}
\label{eq:feasible}
\begin{array}{lcl}
\text{find} &\quad& X=Y^\dag Y \text{ with } Y\in \complexes^{d\times n},\\
\text{subject to} && X_{k,k} = 1 \text{ for all } j\in V(G),\\
&& X_{\ell,k} = 0 \text{ for all } [\ell,k]\in E(G).\\
\end{array}
\end{equation}
This problem is equivalent to the optimization problem
\begin{equation}
\label{eq:mini}
\begin{array}{lcl}
\text{minimize} &\quad&
\sum_{k\in V(G)} (X_{k,k}-1)^2+ \sum_{[\ell,k]\in E(G)} X_{\ell,k}^2,\\
\text{with} && X=Y^\dag Y \text{ and } Y\in \complexes^{d\times n},
\end{array}
\end{equation}
where the problem in Eq.~\eqref{eq:feasible} is feasible if and only if the 
problem in Eq.~\eqref{eq:mini} yields zero.
The optimization can be executed using a standard algorithm like the  
conjugate-gradient method \cite{press2007numerical}.
However, the obtained value can be from a local minimum and depend on the 
 initial value used in the optimization.
Hence obtaining a value greater than zero does not conclusively exclude the 
existence of a projective representation, but this problem can be mitigated by 
performing the minimization for many different initial values.

Instead of employing one of the standard optimization algorithms, we use a 
faster method that allows us to repeat the minimization with many different 
initial values.
For this we denote by $\mathcal L$ the set of all $(n\times n)$-matrices $X$ 
which satisfy the constraints of the problem in Eq.~\eqref{eq:feasible} and we 
write $\mathcal R$ for the set of all matrices $X$ for which $X=Y^\dag Y$ for 
some $(d\times n)$-matrix $Y$.
In an alternating optimization, we generate a sequence $(X^{(j)})_j$ from an 
initial value $X^{(0)}$ such that
\begin{equation}\begin{split}
 X^{(2i+1)}&= \argmin_R \set{\norm{R-X^{(2i)}} | R\in \mathcal R},\\
 X^{(2i)}&= \argmin_L \set{ \norm{L-X^{(2i-1)}} | L\in \mathcal L}.
\end{split}\end{equation}
By construction, $\delta_j=\norm{X^{(j)}-X^{(j-1)}}$ is a nonincreasing 
sequence and hence $\delta_\infty=\lim_{j\to \infty} \delta_j$ exists.
Consequently, for the existence of a projective representation it is sufficient 
if $\delta_\infty=0$ because then $X^{(\infty)}=\lim_{j\to\infty} X^{(j)}$ 
exists with $X^{(\infty)} \in \mathcal R\cap\mathcal L$.
In Appendix~\ref{app:seesaw} we show that this alternating optimization can be 
implemented efficiently for the Frobenius norm $\norm{M}_F=\sum_{i,j} 
\abs{M_{i,j}}^2$.

We run the optimization with $100$ randomly chosen initial values $X^{(0)}$ for each of 
the remaining graphs with corresponding rank $r$.
We stop the optimization if $\delta_{k-2}/\delta_k < 1+10^{-5}$.
For all graphs and all repetitions the optimization converges with a final 
value of $\delta_k$ in the order of $1$.
In comparison, we test the algorithm for many graphs with known $d_\pi$ where 
the graphs have up to 40 vertices.
In all these cases, the algorithm converges to $\delta_k$ in the order of 
$10^{-9}$, which gives us confidence that the alternating optimization is 
reliable.
In summary this constitutes strong numerical evidence that none of the 
remaining graphs with corresponding rank has a projective representation with 
SIC.

\section{Conclusion and discussion}\label{sec:conclusion}

The search for a primitive entity of contextuality has not yet reached a 
conclusion despite of decades of research on this topic.
Of course, one can argue that the pentagon scenario by Klyachko \emph{et al.} 
\cite{klyachko2008simple} does provide a provably minimal scenario.
But the drawback of the pentagon scenario is that it is state-dependent.
That is, contextuality is here a feature of both, the state and the 
measurements.
In contrast, in the state-independent approach, contextuality is a feature 
exclusively of the measurements and we argue that a primitive entity of 
contextuality should embrace state-independence.
Among the known SIC scenarios, the one by Yu and Oh 
\cite{yu2012stateindependent} is minimal and this has also been proved 
rigorously for the case where all measurement outcomes are represented by rank-one projectors.

As we pointed out here, there is no guarantee that the actual minimal scenario 
will also be of rank one:
We showed that a scenario by Toh\cite{toh2013stateindependent}---albeit far 
from minimal---requires projectors of rank two.
This motivated our search for the minimal SIC scenario for the case of nonunit 
rank.
Due to Theorem~\ref{thm:stabisrank}, we can exclude the case where the 
exclusivity graph has 8 or less vertices.
For the remaining cases of 9 to 12 vertices, we also obtain a negative result, 
however, under the restriction that the projective representation is uniformly 
of rank two or uniformly of rank three.
A key to this result is a fast and empirically reliable numerical method to 
find or exclude projective representations of a graph, which might be also a 
useful method for related problems in graph theory.

Curiously, there is no simple argument that shows that the scenario by Yu and 
Oh is minimal, even when assuming unit rank.
This in contrast to the case of state-dependent contextuality, where the reason 
that the pentagon scenario is the simplest scenario beautifully has the origin 
in graph theory\cite{cabello2014graphtheoretic}.
For the future it will be interesting to develop additional methods for SIC, in 
particular for the case of heterogeneous rank.
It will be particularly interesting whether this problem can be solved using 
more methods from graph theory, whether it can be solved using new numerical 
methods, or whether the problem turns out to be genuinely hard to decide.

\begin{acknowledgements}
We thank A.\ Cabello, N.\ Tsimakuridze, Y.Y.\ Wang for discussions, A.\ Ganesan 
for pointing out Ref.~\onlinecite{schrijver2004fractional} to us, and the 
University of Siegen for enabling our computations through the HoRUS cluster.
This  work  was  supported  by the Deutsche Forschungsgemeinschaft  (DFG,  
German  Research Foundation - 447948357), the ERC (ConsolidatorGrant  
683107/TempoQ), and the Alexander von Humboldt Foundation.
\end{acknowledgements}

\appendix

\section{Elements from graph theory}
\label{app:basicgt}
A graph $G$ is a collection of vertices $V(G)$ connected by edges $E(G)$. Each 
edge $[i,j]\in $ is an unordered pair of the vertices $i\ne j\in V(G)$. 
Conversely, for a given vertex set $V$ and edge set $E$ the pair $(V,E)$ forms 
the graph denoted by $G(V,E)$. For a given subset $W$ of $V$ and subset $F$ of 
$E$, the graph $G(W,F)$ is a subgraph of $G(V,E)$. In the case where
\begin{equation}
 F = E \cap \{[i,j]\}_{i,j \in W},
\end{equation}
$G(W,F)$ is a subgraph of $G(V,E)$ induced by the subset $W$. In the case where
\begin{equation}
 F = \{[i_t,i_{t+1}]\}_t,
\end{equation}
$G(W,F)$ is  a path in $G(V,E)$. A graph $G(V,E)$ is connected if any two 
vertices can be connected by a path.
A subset of vertices $C$ is a clique, if in the induced subgraph all vertices 
are mutually connected by an edge.
A clique $C$ is maximal, if any strict superset of $C$ is not clique.
The complement graph $\bar G$ of $G$ has an edge $[i,j]$ if and only if $i\ne 
j$ and $[i,j]$ is not an edge in $G$.
A clique in $\bar G$ is an independent set of $G$.
Independent sets are also called stable sets.
If any strict superset of $W$ is not an independent set, then $W$ is a 
maximally independent set.

Now, the index vector of a given subset of vertices $W$ is defined as
\begin{equation}
  \Delta_W = [\delta_W(k)]_{k\in V},
\end{equation}
where $\delta_W(k)=1$ if $k\in W$ and $\delta_W(k)=0$ otherwise. Let 
$\mathcal{I}$ denote the set of all independent sets of  graph $G$, then the 
stable set polytope $\stabb(G)$ is the convex hull of the set $\set{\Delta_W | 
W \in \mathcal{I}}$.

A collection of real vectors $(\vec{v}_i)_{i\in V}$ is an orthogonal 
representation (OR) of $G$, provided that
$[i,j] \not\in E$  implies $\vec{v}_i\cdot \vec{v}_j = 0$.
The Lovász theta body of a given graph $G$ can be defined 
as~\cite{grotschel1986relaxations}
\begin{equation}
  \thb(G) = \set{[(\vec{s}\cdot\vec{v}_i)^2]_{i\in V} | (\vec{v}_i)_{i\in V}\ 
\text{is an OR of }\overline{G}},
\end{equation}
where $\vec{s} = (1,0,\ldots,0)$. We also use the following, equivalent 
definition of $\thb(G)$. A collection of projectors $(\Pi_k)_{k\in V}$ (over a 
complex Hilbert space) is a projective representation (PR) of $G$ if 
$\Pi_i\Pi_j=0$ whenever $[i,j] \in E(G)$.
Then, one can also write \cite{cabello2014graphtheoretic}
\begin{equation}\label{eq:defthb}
  \thb(G) = \set{[\tr(\rho \Pi_i)]_{i\in V}| (\Pi_i)_{i\in V}\ \text{is a PR of 
}G, \tr(\rho)=1, \rho\ge 0}.
\end{equation}
Note that in the definition, the projectors might be of any rank.

For a vector $\vec{r}$ of nonnegative real numbers,
\begin{equation}
\alpha(G,\vec{r}) = \max_{\vec{x}}
 \set{\vec{r}\cdot\vec{x}|\vec{x} \in \stabb(G)}
\end{equation}
is the weighted independence number \cite{grotschel1986relaxations} and the 
weighted Lovász number is given \cite{knuth1994sandwich} by
\begin{equation}\label{eq:thTH}
\vartheta(G,\vec{r}) = \max_{\vec{x}}
 \set{\vec{r}\cdot \vec{x}|\vec x\in \thb(G)}.
\end{equation}
For convenience, we write $\overline{\vartheta}(G,\vec{r}) = 
\vartheta(\overline{G},\vec{r})$.

The weighted chromatic number $\chi(G,\vec{r})$ can be defined 
as\cite{schrijver2004fractional}
\begin{equation}\label{eq:chromatic}
\begin{aligned}
\min_{(c_I)_{I\in \mathcal I}}&\quad&& \sum_{I\in \mathcal{I}} c_I,\\
\text{such that}&&& \sum_{I\ni i} c_I \ge r_i, \text{ for all } i\in V,
\end{aligned}
\end{equation}
where $c_I$ are nonnegative integers.
Equivalently, if $C=\chi(G,\vec r)$, then there exists an $\vec r$-coloring of 
$G$ with $C$ colors, that is, $C$ is the minimal number of colors such that 
$r_k$ colors are assigned to each vertex $k$ and two vertices $i$ and $j$ do 
not share a common color if they are connected.

The weighted fractional chromatic number $\chif(G,\vec r)$ is a relaxation of 
the integer program in Eq.~\eqref{eq:chromatic} to a linear program 
\cite{schrijver2004fractional}
\begin{equation}\label{eq:defchif}
\begin{aligned}
\min_{(x_I)_{I\in \mathcal I}}&\quad&& \sum_{I\in \mathcal{I}} x_I,\\
\text{such that}&&& \sum_{I\ni i} x_I \ge r_i, \text{ for all } i\in V,
\end{aligned}
\end{equation}
where $x_I$ are now nonnegative real numbers.
Being a linear program with rational coefficients, all $x_I$ can be chosen to 
be rational numbers and hence one can find a $b\in \mathbb N$ such that all 
$bx_I$ are integer.
This yields the relation
\begin{equation}
  \chi_f(G,\vec{r}) = \min_{b\in \mathbb{N}} \frac{\chi(G,b\vec{r})}{b}.
\end{equation}

Finally, we use $d_\pi(G,\vec r)$ as defined in the main text, that is, 
$d_\pi(G,\vec r)$ is the minimal dimension admitting a rank-$\vec r$ PR.
We also omit the weights $\vec r$ for the functions $d_\pi$, $\chif$, and 
$\overline\vartheta$, if $\vec r=\vec 1$.
We now show the known relation \cite{lovasz1979shannon} 
$\overline\vartheta(G)\le d_\pi(G)$, which is extended to the case of $\vec 
r=\vec 1$ in Eq.~\eqref{eq:tdx} in the main text.
\begin{lemma}
 $\overline\vartheta(G)\le d_\pi(G)$
\end{lemma}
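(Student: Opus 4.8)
The plan is to prove the bound by pairing the rank-one projective representation that realizes $d_\pi(G)$ with the description of $\overline\vartheta(G)$ as a maximum over orthonormal representations, and to bridge the two by a tensor product that annihilates complementary sets of vertex pairs. The point is that the projective representation makes inner products vanish exactly on the edges of $G$, whereas an orthonormal representation entering $\overline\vartheta(G)$ makes them vanish on the non-edges; tensoring the two therefore produces a genuinely orthonormal family, after which Bessel's inequality and Parseval's identity do the rest.

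Concretely, I would first fix a rank-one projective representation $(\proj{\psi_k})_{k}$ of $G$ in $\complexes^{d}$ with $d=d_\pi(G)$, so that the $\ket{\psi_k}$ are unit vectors with $\braket{\psi_k|\psi_\ell}=0$ whenever $[k,\ell]\in E(G)$. Combining the orthonormal-representation form of the theta body given in Appendix~\ref{app:basicgt} (applied to $\overline G$) with the definition of the weighted Lov\'asz number in Eq.~\eqref{eq:thTH} expresses $\overline\vartheta(G)=\vartheta(\overline G)$ as the supremum of $\sum_k\abs{\braket{c|v_k}}^2$ over unit handles $\ket c$ and over orthonormal representations $(\ket{v_k})_k$ of $G$, that is, unit vectors obeying $\braket{v_k|v_\ell}=0$ for $[k,\ell]\notin E(G)$. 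Given any such representation in $\complexes^{D}$ and any unit $\ket c$, I would set $\ket{w_k}=\ket{v_k}\otimes\ket{\psi_k}$. Since $\braket{w_k|w_\ell}=\braket{v_k|v_\ell}\braket{\psi_k|\psi_\ell}$ and the second factor vanishes on edges while the first vanishes on non-edges, the family $(\ket{w_k})_k$ is orthonormal. For any orthonormal basis $\ket{a_1},\dotsc,\ket{a_d}$ of $\complexes^{d}$ the vectors $\ket c\otimes\ket{a_m}$ are unit, so Bessel's inequality gives $\sum_k\abs{\braket{c|v_k}}^2\abs{\braket{a_m|\psi_k}}^2\le 1$ for each $m$. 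Summing over $m$ and using $\sum_m\abs{\braket{a_m|\psi_k}}^2=\norm{\ket{\psi_k}}^2=1$ yields $\sum_k\abs{\braket{c|v_k}}^2\le d$, and since $\ket c$ and $(\ket{v_k})_k$ were arbitrary the supremum is at most $d=d_\pi(G)$.

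The step I expect to be the crux is precisely the handling of the dimension factor $d$. A tempting shortcut is to use the dual (minimum) formulation of $\vartheta$ and to seek a single handle $\ket a\in\complexes^{d}$ with $\abs{\braket{a|\psi_k}}^2\ge 1/d$ for every $k$; however, for a fixed representation such a handle need not exist, so this route stalls. The resolution is to avoid committing to one handle and instead sum the Bessel inequalities over an entire orthonormal basis of handles: the pointwise lower bound is thereby replaced by the exact Parseval identity $\sum_m\abs{\braket{a_m|\psi_k}}^2=1$, which supplies the factor $d$ without any case analysis. The only remaining care is bookkeeping the two complementary vanishing patterns so that $(\ket{w_k})_k$ is verified to be orthonormal, and checking that the orthonormal-representation form of $\overline\vartheta(G)$ is invoked with the correct graph, namely $G$ rather than $\overline G$.
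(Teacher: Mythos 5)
Your proof is correct and at its core coincides with the paper's: both tensor the $d_\pi(G)$-dimensional rank-one projective representation of $G$ with a representation of $\overline G$ (your orthonormal representation of $G$, orthogonal on non-edges, is precisely a rank-one PR of $\overline G$), exploit the complementary vanishing patterns under the tensor product to obtain a mutually orthogonal family, and extract the factor $d=d_\pi(G)$ by a dimension count. The differences are only presentational: you work at the vector level with a pure handle $\ket c$ and Bessel's inequality summed over an orthonormal basis (equivalently, evaluating the operator inequality against $\proj{c}\otimes\id$), whereas the paper works at the operator level with an arbitrary state $\rho$, the conjugate-doubled projectors $P_k=\Pi_k^*\otimes\Pi_k$, and the maximally entangled operator $\Psi$ --- your version shows, in passing, that the doubling step is dispensable.
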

\begin{proof}
For a given $d$-dimensional rank-$1$ PR $(\Pi_k)_k$ of $G$, a $d^2$-dimensional rank-$1$ PR $(P_k)_k$ of $G$ can be constructed as
\begin{equation}
P_k = \Pi_k^* \otimes \Pi_k,
\end{equation}
where complex conjugation is with respect to some arbitrary, but fixed 
orthonormal basis $\ket 1,\ket 2,\dotsc,\ket d$.
Using $\Psi = \sum_{j,l} \ketbra{jj}{ll}$, we have
$\tr(\Psi P_k) = 1$ and $\tr(\Psi)=d$.

We consider now an arbitrary rank-$1$ PR $(Q_k)_k$ of $\overline{G}$ together 
with an arbitrary density operator $\rho$ acting on the same Hilbert space as 
the PR. Then $(P_i\otimes Q_j)_{i,j}$ is a PR of $G\otimes \overline{G}$ and 
$(i,i)$ is connected with $(j,j)$ either within $G$ or within $\overline G$, 
for any two vertices $i\ne j$.
Here $G\otimes K$ denotes the graph with vertices $V(G)\times V(K)$ and where  
$[(v,w),(v',w')]$ is an edge, if $[v,v']$ or $[w,w']$ is an edge.

Therefore, $\sum_k P_k\otimes Q_k\le \openone$ and consequently,
\begin{equation}
d = \tr(\Psi\otimes \rho) \ge \sum_{k} \tr([\Psi\otimes \rho][P_k\otimes 
Q_k]) = \sum_{k} \tr(\rho Q_k).
\end{equation}
By virtue of Eq.~\eqref{eq:defthb} we obtain $\sum x_i\le d$ for all $\vec x\in 
\thb(\overline G)$, which then yields the desired inequality due to 
Eq.~\eqref{eq:thTH}.
\end{proof}

The disjoint union $G=G_1\cup G_2$ of two graphs consists of the disjoint union 
of the vertices, $V(G)=V(G_1)\uplus V(G_2)$, and $[i,j]$ is an edge in $G$ if 
it is an edge in either $G_1$ or $G_2$. For Condition~1 in 
Section~\ref{sec:conditions} we use the following observation.
\begin{lemma}
If $G=\bigcup_i G_i$, then $d_\pi(G,\vec r)= \max_i d_\pi(G_i, \vec r_i)$ and 
$\chi_f(G,\vec r)= \max_i \chi_f(G_i, \vec r_i)$, where $\vec{r}_i$ is the part 
of $\vec{r}$ for $G_i$.
\end{lemma}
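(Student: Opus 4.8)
The plan is to prove each of the two equalities by establishing the two opposite inequalities, exploiting the defining feature of a disjoint union: an edge of $G$ can only join two vertices lying in the same component $G_i$. Consequently both the orthogonality constraints of a projective representation (PR) and the membership constraints of an independent set decompose across components. Concretely, a set $W\subseteq V(G)$ is independent in $G$ if and only if each $W\cap V(G_i)$ is independent in $G_i$, and a family $(\Pi_k)_{k\in V(G)}$ is a PR of $G$ if and only if its restriction to each $V(G_i)$ is a PR of $G_i$, there being no cross-component orthogonality relations to check.

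For $d_\pi$, I would first establish $\max_i d_\pi(G_i,\vec r_i)\le d_\pi(G,\vec r)$ by restriction: take a rank-$\vec r$ PR of $G$ realizing the minimal dimension $d=d_\pi(G,\vec r)$; its restriction to each $V(G_i)$ is a rank-$\vec r_i$ PR of $G_i$ living in the same dimension $d$, whence $d_\pi(G_i,\vec r_i)\le d$ for every $i$. For the reverse inequality I would embed: set $d=\max_i d_\pi(G_i,\vec r_i)$, realize each $G_i$ by a rank-$\vec r_i$ PR inside a fixed $\mathbb C^{d}$ (possible since $d_\pi(G_i,\vec r_i)\le d$ lets us extend its projectors by zero so that they act on $\mathbb C^{d}$ without changing their rank), and take the union of all these families. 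Since distinct components share no edge, the only orthogonality relations that must hold are the intra-component ones, which hold by construction, so this is a rank-$\vec r$ PR of $G$ in dimension $d$ and $d_\pi(G,\vec r)\le d$.

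For $\chi_f$ I would argue directly from the covering linear program in Eq.~\eqref{eq:defchif}. The inequality $\max_i \chi_f(G_i,\vec r_i)\le \chi_f(G,\vec r)$ again follows by restriction: an optimal fractional cover $(x_I)_{I\in\mathcal I}$ of $G$ of weight $W=\chi_f(G,\vec r)$ induces, upon replacing each $I$ by $I\cap V(G_i)$, a feasible cover of $G_i$ of weight at most $W$, so $\chi_f(G_i,\vec r_i)\le W$. The reverse inequality is the one genuine construction. Let $m=\max_i\chi_f(G_i,\vec r_i)$ and take optimal covers $(x^{(i)}_I)_I$ of each $G_i$; padding the weight of the empty independent set, I may assume every component's cover has total weight exactly $m$, so that $p_i:=x^{(i)}/m$ is a probability weighting on the independent sets of $G_i$. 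I then overlay the component covers on a shared weight budget by assigning, to each independent set $I=\biguplus_i I_i$ of $G$ built from component-independent sets $I_i$, the weight $m\prod_i p_i(I_i)$. Marginalizing shows the total weight equals $m$ and that each vertex $k\in V(G_j)$ receives coverage $m\sum_{I_j\ni k}p_j(I_j)\ge r_k$, so this is a feasible cover of $G$ of weight $m$ and $\chi_f(G,\vec r)\le m$.

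The main obstacle is precisely this last overlay step for $\chi_f$: one must combine the per-component colorings in parallel on a single shared color budget rather than summing them, which would wrongly give $\sum_i\chi_f(G_i,\vec r_i)$. The disjoint-union structure is exactly what guarantees that the product weighting $I=\biguplus_i I_i$ lands among the independent sets of $G$, and it is the only place that structure is used nontrivially; everything else is routine embedding and restriction. Equivalently, one could present this overlay through the integer relation $\chi_f(G,\vec r)=\min_b\chi(G,b\vec r)/b$, reusing a common palette of colors across components since no two components are adjacent, which reduces the claim to $\chi(G,\vec s)=\max_i\chi(G_i,\vec s_i)$ after passing to a common multiple of the per-component optimal denominators.
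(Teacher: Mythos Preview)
Your proof is correct. For $d_\pi$ your argument is essentially identical to the paper's: restriction in one direction, embedding in a common ambient dimension in the other.

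For $\chi_f$ you take a genuinely different route. The paper observes that $G^{\vec r}=\bigcup_i G_i^{\vec r_i}$, invokes $\chi_f(G,\vec r)=\chi_f(G^{\vec r})$ from Theorem~\ref{thm:equalfu}, and then cites the standard fact that the (unweighted) fractional chromatic number of a disjoint union is the maximum over components. Your argument instead works directly with the LP in Eq.~\eqref{eq:defchif}: after normalizing each component's optimal cover to a probability distribution on its independent sets, you build a cover of $G$ via the product measure $I=\biguplus_i I_i\mapsto m\prod_i p_i(I_i)$, and verify feasibility by marginalization. This is more self-contained than the paper's citation-based argument and makes the ``shared palette'' mechanism explicit; the paper's version is shorter but relies on an external reference. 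Your closing remark about recovering the same conclusion through $\chi_f(G,\vec r)=\min_b\chi(G,b\vec r)/b$ and a common multiple of the component denominators is essentially the paper's reduction, spelled out in slightly more detail.
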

\begin{proof}
By definition, $d_\pi(G,\vec{r}) \ge \max_i d_\pi(G_i,\vec{r}_i)$. Conversely, 
if $d =\max_i d_\pi(G_i,\vec{r}_i)$ then we can find a $d$-dimensional 
rank-$\vec{r}_i$ PR for each $G_i$. Since the subgraphs are mutually disjoined, 
these PRs jointly form already a $d$-dimensional rank-$\vec{r}$ PR of $G$. Thus 
$d \ge d_\pi(G,\vec{r})$.

For the fractional chromatic number, one first observes that $G^{\vec 
r}=\bigcup_i G_i^{\vec r_i}$. Hence the assertion reduces to $\chif(\bigcup_i 
G_i^{\vec r_i})=\max_i\chif(G_i^\vec{r_i})$, which is a well-known relation for 
disjoint unions of graphs \cite{scheinerman1997fractional}.
\end{proof}

The join $G=G_1+ G_2$ of two graphs is similar to the disjoint union, however 
with an additional edge between any two vertices $[i,j]$ if $i\in V(G_1)$ and 
$j\in V(G_2)$. For Condition~2 in Section~\ref{sec:conditions} we use then the 
following observation.
\begin{lemma}
If $G=\sum_i G_i$, then
$d_\pi(G, \vec r) = \sum_i d_\pi(G_i, \vec r_i)$ and $\chi_f(G, \vec r) = 
\sum_i \chi_f(G_i, \vec r_i)$.
\end{lemma}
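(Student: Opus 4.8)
The plan is to prove the two identities for the join $G=\sum_i G_i$ by establishing matching upper and lower bounds, exactly mirroring the structure of the preceding lemma for disjoint unions but with the roles of ``$\max$'' and ``$\sum$'' interchanged. The key structural fact is that in the join, every vertex of $G_i$ is connected to every vertex of $G_j$ for $i\ne j$, so any clique-based or orthogonality-based object decomposes additively across the parts.

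\begin{proof}
It suffices to treat the case $G=G_1+G_2$; the general case follows by
induction. We first show $d_\pi(G,\vec r)=d_\pi(G_1,\vec r_1)+d_\pi(G_2,\vec
r_2)$. For the upper bound, take a $d_1$-dimensional rank-$\vec r_1$ PR
$(\Pi_k)_{k\in V(G_1)}$ of $G_1$ in a Hilbert space $\mathcal H_1$ and a
$d_2$-dimensional rank-$\vec r_2$ PR $(\Pi_k)_{k\in V(G_2)}$ of $G_2$ in
$\mathcal H_2$, where $d_i=d_\pi(G_i,\vec r_i)$. Embedding both into
$\mathcal H_1\oplus\mathcal H_2$, every projector of the first family is
orthogonal to every projector of the second, so the union is a rank-$\vec r$
PR of $G$ in dimension $d_1+d_2$, giving $d_\pi(G,\vec r)\le d_1+d_2$. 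For the
lower bound, let $(\Pi_k)_{k\in V(G)}$ be any rank-$\vec r$ PR of $G$ in its
minimal dimension $d$. Since every $k\in V(G_1)$ is adjacent to every $\ell\in
V(G_2)$, the PR condition forces $\Pi_k\Pi_\ell=0$, so the supports of the
$G_1$-projectors are orthogonal to the supports of the $G_2$-projectors. Thus
the subspaces $\mathcal S_1=\mathrm{span}\bigcup_{k\in V(G_1)}\mathrm{ran}\,\Pi_k$
and $\mathcal S_2=\mathrm{span}\bigcup_{\ell\in V(G_2)}\mathrm{ran}\,\Pi_\ell$
are orthogonal, whence $d\ge\dim\mathcal S_1+\dim\mathcal S_2\ge
d_\pi(G_1,\vec r_1)+d_\pi(G_2,\vec r_2)$, since the restriction to $\mathcal
S_i$ is a valid rank-$\vec r_i$ PR of $G_i$.

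For the fractional chromatic number I would pass to the blow-up and reduce to
the unweighted statement. One checks from the definition of $G^{\vec r}$ that
$(G_1+G_2)^{\vec r}=G_1^{\vec r_1}+G_2^{\vec r_2}$, because a pair
$[(k,i),(\ell,j)]$ with $k\in V(G_1)$ and $\ell\in V(G_2)$ is always an edge.
By Theorem~\ref{thm:equalfu} we have $\chif(G,\vec r)=\chif(G^{\vec r})$ and
$\chif(G_i,\vec r_i)=\chif(G_i^{\vec r_i})$, so the assertion reduces to
$\chif(H_1+H_2)=\chif(H_1)+\chif(H_2)$ for the join of unweighted graphs, which
is a standard identity for the fractional chromatic number
\cite{scheinerman1997fractional}. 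Alternatively one argues directly on the
linear program in Eq.~\eqref{eq:defchif}: every maximal independent set of
$G_1+G_2$ lies entirely within a single part, so the constraint matrix is
block-diagonal and the optimal value splits as a sum over the parts.

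The main obstacle is the lower bound in the dimension identity, where I must
argue that a minimal PR of the join genuinely \emph{decomposes} into orthogonal
pieces rather than merely that such a decomposition exists for \emph{some}
PR---this is precisely where the all-to-all connectivity of the join is
indispensable, forcing $\Pi_k\Pi_\ell=0$ and hence orthogonality of the
supports across parts. The fractional chromatic number identity, by contrast,
is essentially bookkeeping once the blow-up relation $(G_1+G_2)^{\vec
r}=G_1^{\vec r_1}+G_2^{\vec r_2}$ is observed and Theorem~\ref{thm:equalfu} is
invoked.
\end{proof}
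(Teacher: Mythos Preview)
Your proof is correct and follows essentially the same approach as the paper: the block-diagonal embedding for the upper bound on $d_\pi$ and the orthogonal-subspace decomposition for the lower bound, together with the reduction $G^{\vec r}=\sum_i G_i^{\vec r_i}$ (via Theorem~\ref{thm:equalfu}) and the standard additivity of $\chif$ under joins \cite{scheinerman1997fractional}. The only cosmetic differences are your restriction to two summands with induction and the supplementary direct LP argument, neither of which constitutes a genuinely different route.
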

\begin{proof}
For given $d_i$-dimensional rank-$\vec{r}_i$ PRs $(\Pi_j)_{j\in V(G_i)}$ of 
$G_i$, we define
\begin{equation}
  P_{j,i} = \left(\bigoplus_{k<i} \oo_{k}\right) \oplus \Pi_j \oplus 
\left(\bigoplus_{k>i} \oo_{d_k}\right),
\end{equation}
where $j \in G_i$ and $\oo_k$ is the zero-operator acting on the space of the 
PR of $G_k$. This construction achieves that $((P_{j,i})_{j\in V(G_i)})_i$ is a 
$(\sum_i d_i)$-dimensional rank-$\vec{r}$ PR of $G$ and therefore 
$d_\pi(G,\vec{r}) \le \sum_i d_\pi(G_i, \vec{r}_i)$ holds. Conversely, from a 
given $d$-dimensional rank-$\vec{r}$ PR of $G$, we can deduce a 
$d_i$-dimensional rank-$\vec{r}_i$ PR of each $G_i$, where $d_i$ is the 
dimension of the subspace $S_i$ where  $(\Pi_j)_{j\in G_i}$ acts nontrivially. 
Since each of subspace $S_i$ is orthogonal to the other subspaces $S_j$, we 
obtain $d \ge \sum_i d_i \ge \sum_i d_\pi(G_i, \vec{r}_i)$.

For the fractional chromatic number, we note that $G^{\vec r}=\sum_i G_i^{\vec 
r_i}$ and since $\chif$ is additive under the join of graphs 
\cite{scheinerman1997fractional}, the assertion follows.
\end{proof}

\section{$G_\mathrm{Toh}$ has no rank-one projective representation}
\label{app:gtoh}
It can be verified numerically that there is no $4$-dimensional rank-$1$ PR of 
$G_{\rm Toh}$ with our numerical methods in  Appendix~\ref{app:seesaw}. Here, 
we give an analytical proof with the help of the computer algebra system 
Mathematica.

Since each (row) vector $\vec{v}$ corresponds to a rank-$1$ projector 
$P(\vec{v}) = \vec{v}^\dagger\vec{v}/|\vec{v}|^2$, we can use vectors instead 
of projectors in the case of rank-$1$ PR. Also, two non-zero vectors 
$\vec{v}_1$ and $\vec{v}_2$ are called equal if $P(\vec{v}_1) = P(\vec{v}_2)$.  
For three independent vectors $\vec{v}_1, \vec{v}_2, \vec{v}_3$ in the 
$4$-dimensional Hilbert space, from Cramer's rule we know that their common 
orthogonal vector is proportional to $\Lambda(\vec{v}_1, \vec{v}_2, \vec{v}_3) 
= (\lambda_1, \lambda_2, \lambda_3, \lambda_4)^*$, with $\vec{v}_i = \{v_{i,1}, 
v_{i,2}, v_{i,3}, v_{i,4}\}$,
\begin{equation}
  \lambda_i = (-1)^i \left|\begin{array}{ccc}
	v_{1,i+1} & v_{1,i+2} & v_{1,i+3}\\
        v_{2,i+1} & v_{2,i+2} & v_{2,i+3}\\
        v_{3,i+1} & v_{3,i+2} & v_{3,i+3}\\
   \end{array}\right|,\
\end{equation}
where the sum $i+j$ is modulo $4$.
The proof that there is no $4$-dimensional rank-$1$ SIC set for $G_{30}$ can be 
divided in two cases.

\textbf{Case~1:} Let $\{\vec{v}_i\}_{i\in V(G_{30})}$ be a $4$-dimensional 
rank-$1$ PR. We first consider the case where 
\begin{equation}
  \vec{v}_i\neq\vec{v}_j \text{~~~for~} (i,j) \in \{(5,21), (4,24), (14,23), 
  (3,10), (3,22), (4,22)\}.
  \label{eq:case1}
\end{equation}
We can have the following process of parametrization in the basis of 
$\{\vec{v}_{28}, \vec{v}_{14}, \vec{v}_1, \vec{v}_{22}\}$:
\begin{eqnarray}
  &\vec{v}_{28} = (1,0,0,0); \vec{v}_{14} = (0,1,0,0); \vec{v}_1 = (0,0,1,0); \vec{v}_{22} = (0,0,0,1);\\ 
  &\vec{v}_{2}=(\cos x_1,\sin x_1,0,0);\vec{v}_{13}=(0,0,\cos x_2,e^{i \theta_1} \sin x_2);\\
  &\vec{v}_{17}=(-\sin x_1 \cos x_3,\cos x_1 \cos x_3,e^{i \theta_2} \sin x_3,0);\\
  &\vec{v}_{29}=(-\sin x_1 \sin x_3,\cos x_1 \sin x_3,-e^{i \theta_2} \cos x_3,0);\\
  &\vec{v}_{20}=(\cos x_4\cos x_1,\cos x_4\sin x_1,0,\sin x_4);\\
  &\vec{v}_{3}=(-\sin x_1 \cos x_5,\cos x_1 \cos x_5,0,e^{i \theta_3} \sin x_5);\\
  &\vec{v}_{4}=(-\sin x_1 \sin x_5,\cos x_1 \sin x_5,0,-e^{i \theta_3} \cos x_5).
\end{eqnarray}
We claim that $\vec{v}_3$ is not on the plane spanned by $\vec{v}_{20}, 
\vec{v}_{29}$, otherwise $\vec{v}_5 \perp \vec{v}_3$. Thus, $\vec{v}_5 
= \vec{v}_{21}$ since they are orthogonal to $\vec{v}_3, \vec{v}_6, 
\vec{v}_{24}$ in the $4$-dimensional space. This is conflicted with the 
assumption in Eq.~\eqref{eq:case1}.  Hence, we get $\vec{v}_{24} 
= \Lambda(\vec{v}_3, \vec{v}_{20}, \vec{v}_{29})$, which further leads to 
$\vec{v}_5 = \Lambda(\vec{v}_{20},\vec{v}_{24},\vec{v}_{29})$.  Note that 
$\vec{v}_4 \perp \vec{v}_3, \vec{v}_{4}\perp \vec{v}_{21}$, hence $\vec{v}_4$ 
is on the plane spanned by $\vec{v}_6, \vec{v}_{24}$. Since $\vec{v}_4 \neq 
\vec{v}_{24}$, we get $\vec{v}_6 = (\vec{v}_{24}\vec{v}_{24}^\dagger)\vec{v}_4 
- (\vec{v}_4\vec{v}_{24}^\dagger)\vec{v}_{24}$ and hence $\vec{v}_{21} 
= \Lambda(\vec{v}_3,\vec{v}_6,\vec{v}_{24})$. Since $\vec{v}_{14} \neq 
\vec{v}_{23}$, we have that $\vec{v}_7 
= \Lambda(\vec{v}_5,\vec{v}_6,\vec{v}_{14})$, $\vec{v}_8 = \Lambda(\vec{v}_5, 
\vec{v}_6, \vec{v}_7)$, and $\vec{v}_{23} = \Lambda(\vec{v}_8, \vec{v}_{13}, 
\vec{v}_{28})$. Since $\vec{v}_3 \neq \vec{v}_{10}$, we have that $\vec{v}_9 
= \Lambda(\vec{v}_3,\vec{v}_7,\vec{v}_{23})$, 
$\vec{v}_{27}=\Lambda(\vec{v}_7,\vec{v}_9,\vec{v}_{23})$, and 
$\vec{v}_{10}=\Lambda(\vec{v}_4,\vec{v}_{21},\vec{v}_{27})$.

For the following proof, we make use of the computer algebra system 
Mathematica.  Since $\vec{v}_4 \perp \vec{v}_5$, direct computation shows that 
$\sin (2 x_5) \sin (x_3-x_4) \sin (x_3+x_4) = 0$. As $\sin (2 x_5) = 0$ will 
result in either $\vec{v}_3 = \vec{v}_{22}$ or $\vec{v}_4 = \vec{v}_{22}$, 
which conflicts with the assumption in Eq.~\eqref{eq:case1}, we have that 
$x_3=\pm x_4 \mod \pi$. Because of the freedom of choosing $\theta_2$, we can, 
without loss of generality, assume that $x_3 = x_4$.  Then $|\vec{v}_8|^2 > 0$ 
implies that $\sin x_4\cos x_4 \neq 0$.  Further, $\vec{v}_8 \perp 
\vec{v}_{28}$ implies that $\cos^2 x_1 = e^{2i\theta_3}\sin^2 x_1$, i.e., 
$\theta_3=0 \mod \pi$ and $x_1 = \pm \pi/4 \mod \pi$. Without loss of 
generality, we can assume that $x_1 = \pi/4$.  Then $|\vec{v}_8|^2 > 0$ also 
implies that $\sin(x_4+x_5) \neq 0$. Since $\vec{v}_7 \perp \vec{v}_{23}, 
\vec{v}_8 \perp \vec{v}_{13}$, we can find that $\cos x_2+e^{i 
(\theta_1+\theta_2)} \sin x_2 = 0$. Without loss of generality, we can assume 
$x_2 = -\pi/4, \theta_1=-\theta_2$. All the above arguments result in that
\begin{equation}
  \vec{v}_8\vec{v}_{10}^\dagger = -e^{i\theta_2} \sin^{10}x_4 \cos^{29}x_4 
  \sin^5(x_4+x_5)/\sqrt{2} \neq 0,
\end{equation}
which conflicts with the exclusivity relations. Thus, $\vec{v}_i = \vec{v}_j$ 
should hold for at least one pair of $(i,j) \in 
\{(5,21),(4,24),(14,23),(3,10),(3,22),(4,22)\}$.

\textbf{Case~2:} Let $\{\vec{a},\vec{b},\vec{c},\vec{d}\}$ be an orthogonal 
basis and $\vec{x},\vec{y}$ are another two vectors in the $4$-dimensional 
space, then
\begin{enumerate}
  \item $\vec{x}\perp \vec{a}, \vec{x}\perp \vec{b}, \vec{y}\perp \vec{c}, 
  \vec{y}\perp \vec{d}$ implies that $\vec{x}\perp \vec{y}$;
  \item $\vec{x}\perp \vec{a}, \vec{x}\perp \vec{b}, \vec{x}\perp \vec{c}$ 
  implies that $\vec{x} = \vec{d}$;
  \item $\vec{x}\perp \vec{a}, \vec{x}\perp \vec{b}, \vec{y}\perp \vec{x}, 
  \vec{y}\perp \vec{d}$ implies that either $\vec{x}\perp \vec{c}$ or $\vec{y} \perp 
  \vec{c}$.
\end{enumerate}
In the language of graph theory, if a given graph $G$ has a rank-$1$ PR in 
dimension $4$, then the graph obtained from  the following rules should also 
have rank-$1$ PR in dimension $4$: let $\{a,b,c,d\}$ be a clique and $x,y$ are 
two other vertices in $G$,
\begin{enumerate}
  \item if $(x,a), (x,b), (y,c), (y,d) \in E(G)$, then add $(x,y)$ to $E(G)$;
  \item if $(x,a), (x,b), (x,c) \in E(G)$, then combine $x,d$ into one vertex 
  whose neighbors is the union of the ones of $x$ and the ones of $d$;
  \item if $(x,a), (x,b), (x,y), (y,d) \in E(G)$, then add either $(y,c)$ or $(x,c)$ to 
  $E(G)$.
\end{enumerate}

When we apply these rules repeatedly to $G_{30}$ after combining any  pair in 
$\{ (5,21), (4,24),$ $(14,23), (3,10), (3,22), (4,22) \}$, we either end up 
with a graph which contains a clique with size larger than $4$ or a self-loop.  
This can be done automatically again with Mathematica. It is obvious that 
a clique of size larger than $4$ has no PR in $4$-dimensional space and 
a self-loop has no rank-1 PR.
\begin{table}[!h]
		\centering
\begin{tabular}{CCCCCCCCCCCCCCCCCCCCCCCCCCCCCCCCCCCCCCCC}
		\hline\hline
1 & 0 & 0 & 0 & 0 & 0 & 0 & 0 & 1 & 1 & 1 & 1 & 0 & 0 & 0 & 0 & 1 & 1 & 1 & 1 & 0 & 0 & 0 & 0 & 1 & 1 & 1 & 1 & 0 & 0 & 0 & 0 & 1 & 1 & 1 & 1 & 0 & 0 & 0 & 0 \\
0 & 1 & 0 & 0 & 0 & 0 & 0 & 0 & 1 & 1 & \bar{1} & \bar{1} & 0 & 0 & 0 & 0 & 1 & 1 & \bar{1} & \bar{1} & 0 & 0 & 0 & 0 & 0 & 0 & 0 & 0 & 1 & 1 & 1 & 1 & 0 & 0 & 0 & 0 & 1 & 1 & \bar{1} & \bar{1} \\
0 & 0 & 1 & 0 & 0 & 0 & 0 & 0 & 1 & \bar{1} & 1 & \bar{1} & 0 & 0 & 0 & 0 & 0 & 0 & 0 & 0 & 1 & 1 & 1 & 1 & 1 & 1 & \bar{1} & \bar{1} & 0 & 0 & 0 & 0 & 0 & 0 & 0 & 0 & 1 & \bar{1} & 1 & \bar{1} \\
0 & 0 & 0 & 1 & 0 & 0 & 0 & 0 & 1 & \bar{1} & \bar{1} & 1 & 0 & 0 & 0 & 0 & 0 & 0 & 0 & 0 & 1 & 1 & \bar{1} & \bar{1} & 0 & 0 & 0 & 0 & 1 & 1 & \bar{1} & \bar{1} & 1 & \bar{1} & 1 & \bar{1} & 0 & 0 & 0 & 0 \\
0 & 0 & 0 & 0 & 1 & 0 & 0 & 0 & 0 & 0 & 0 & 0 & 1 & 1 & 1 & 1 & 1 & \bar{1} & 1 & \bar{1} & 0 & 0 & 0 & 0 & 1 & \bar{1} & 1 & \bar{1} & 0 & 0 & 0 & 0 & 0 & 0 & 0 & 0 & \bar{1} & 1 & 1 & \bar{1} \\
0 & 0 & 0 & 0 & 0 & 1 & 0 & 0 & 0 & 0 & 0 & 0 & 1 & 1 & \bar{1} & \bar{1} & 1 & \bar{1} & \bar{1} & 1 & 0 & 0 & 0 & 0 & 0 & 0 & 0 & 0 & 1 & \bar{1} & 1 & \bar{1} & 1 & 1 & \bar{1} & \bar{1} & 0 & 0 & 0 & 0 \\
0 & 0 & 0 & 0 & 0 & 0 & 1 & 0 & 0 & 0 & 0 & 0 & 1 & \bar{1} & 1 & \bar{1} & 0 & 0 & 0 & 0 & 1 & \bar{1} & 1 & \bar{1} & 1 & \bar{1} & \bar{1} & 1 & 0 & 0 & 0 & 0 & \bar{1} & 1 & 1 & \bar{1} & 0 & 0 & 0 & 0 \\
0 & 0 & 0 & 0 & 0 & 0 & 0 & 1 & 0 & 0 & 0 & 0 & 1 & \bar{1} & \bar{1} & 1 & 0 & 0 & 0 & 0 & 1 & \bar{1} & \bar{1} & 1 & 0 & 0 & 0 & 0 & 1 & \bar{1} & \bar{1} & 1 & 0 & 0 & 0 & 0 & 1 & 1 & 1 & 1 \\
 \hline\hline
\end{tabular}
\caption{\label{tab:kp40rays} Kernaghan and Peres' $40$ rays, where the ray $\vec{r}_i$ is represented by the vector in the  $i$-th column and $\bar{1}$ 
  stands for  $-1$.}
\end{table}

The rank-$2$ PR of $G_{30}$ is made up of Kernaghan 
and Peres' $40$ rays as shown in Table \ref{tab:kp40rays}.
Denote $P_{\{a,b\}} := \vec{r}_a^\dagger\vec{r}_a/|\vec{r}_a|^2 + 
\vec{r}_b^\dagger\vec{r}_b/|\vec{r}_b|^2$ where $\vec{r}_a \vec{r}_b = 0$, then 
the vertices from $v_1$ to $v_{30}$ are represented by the following rank-$2$ 
projectors:
\begin{eqnarray}
  &P_{\{1,7\}},P_{\{2,8\}},P_{\{3,4\}},P_{\{5,6\}},P_{\{9,12\}}, P_{\{13,16\}},P_{\{14,10\}},P_{\{15,11\}},\nonumber\\
  &P_{\{19,20\}},P_{\{21,22\}},P_{\{23,17\}},P_{\{24,18\}},P_{\{28,27\}},P_{\{30,29\}},P_{\{31,25\}},\nonumber\\
  &P_{\{32,26\}},P_{\{33,35\}},P_{\{34,40\}},P_{\{36,37\}},P_{\{38,39\}},P_{\{1,2\}},P_{\{3,5\}},P_{\{9,13\}},\nonumber\\
  &P_{\{14,15\}},P_{\{19,21\}},P_{\{28,30\}},P_{\{23,24\}},P_{\{31,32\}},P_{\{34,36\}},P_{\{33,38\}}.
\end{eqnarray}

\section{Proof of Theorem~\ref{thm:equalfu}}
\label{app:equalfu}
The theorem consists of the following statements for any graph $G$, vertex 
weights $\vec r\in \naturals^{\abs{V(G)}}$, and $m\in \naturals$.
(i) $d_\pi(G^{\vec r},1)= d_\pi(G,\vec r)$,
(ii) $\chif(G^{\vec r},1)= \chif(G,\vec r)$,
(iii) $\overline\vartheta(G^{\vec r},1)= \overline\vartheta(G,\vec r)$,
(iv) $\chif(G,m \vec r)=m \chif(G,\vec r)$, and
(v) $\overline\vartheta(G,m \vec r)=m \overline\vartheta(G,\vec r)$.

\vspace{1ex}\noindent (i)
In the main text, above Theorem~\ref{thm:equalfu}, it was already shown, that 
any rank-one PR of $G^{\vec r}$ induces a rank-$\vec r$ PR of $G$ and vice 
versa. Hence the assertion follows.

\vspace{1ex}\noindent (ii)
For the chromatic number we also have $\chi(G^{\vec r},1)=\chi(G,\vec r)$, as 
it follows by an argument completely analogous to the proof of $d_\pi(G^{\vec 
r},1)=d_\pi(G,\vec r)$ (using colorings instead of projectors).
This implies,
 \begin{equation}
   \chi_f(G,\vec{r}) = \min_{b\in \naturals} \frac{\chi(G,b\vec{r})}{b} = \min_{b\in \naturals} \frac{\chi(G^{\vec{r}},b)}{b} = \chi_f(G^{\vec{r}}).
 \end{equation}

\vspace{1ex}\noindent (iii)
By definition, the weighted Lovász number of $\overline G$ is calculated as
\begin{equation}
  \vartheta(\overline G,\vec{r}) = \max_{\vec \rho,(\Pi_k)_k} \sum_{k\in V(\overline G)}r_k \tr(\rho \Pi_k),
\end{equation}
where the maximum is taken over all states $\rho$ and all PRs $(\Pi_k)_k$ of 
$\overline G$. However, if $(\Pi_k)_k$ is a PR of $\overline G$ then
$(\Pi_k)_{k,i}$ is a ($\vec r$-fold degenerate) PR of $\overline{G^{\vec r}}$, 
due to
\begin{equation}
E(\overline{G^{\vec{r}}})= \set{[(v,i),(w,j)]|[v,w]\in E(\overline G)}.
\end{equation}
 Thus, $\vartheta(\overline{G^{\vec{r}}}) \ge \vartheta(\overline 
G,\vec{r})$.
Conversely, let $(P_{k,i})_{k,i}$ be any PR of $\overline{G^{\vec r}}$.
For any state $\rho$ we let $P'_k = P_{k,\hat \imath}$ for $\hat 
\imath$ the index that maximizes $\tr(\rho P_{k,i})$.
Then $(P'_k)_k$ is a PR of $\overline G$ and hence 
$\vartheta(\overline G, \vec r)\ge \vartheta(\overline{G^{\vec r}})$.

\vspace{1ex}\noindent (iv)
This follows directly from the definition in Eq.~\eqref{eq:defchif} by 
substituting $x_I$ by $mx_I$ and $\vec r$ by $m\vec r$.

\vspace{1ex}\noindent (v)
This follows at once from the definition in Eq.~\eqref{eq:thTH}.

\section{Proof of Theorem~\ref{thm:subset}}
\label{app:subset}
The theorem consists of three statements: (i) $\overline{\rankb}(G)$ is convex, 
(ii) $\stabb(G) \subseteq \overline{\rankb}(G)$, and (iii) 
$\overline{\rankb}(G) \subseteq \thb(G)$.

\vspace{1ex} \noindent (i)
For any vector $\vec p\in \rankb(G)$ we can find a $d$-dimensional PR 
 $(\Pi_k)_k$ such that $p_k = \tr(\Pi_k)/d$. With $\Pi'_k$ and $d'$ accordingly 
 for $\vec p'\in \rankb(G)$, we let
\begin{equation}
\Gamma_k=
(\id_{d'}\otimes\Pi_k)\oplus
(\id_d\otimes\Pi'_k),
\end{equation}
where $A\oplus B$ denotes the block-diagonal matrix with blocks $A$ and $B$.
By construction, $(\Gamma_k)_k$ is a $(2dd')$-dimensional PR of $G$. Due to 
$\tr(\Gamma_k)/(2dd')=(d'\tr(\Pi_k)+d\tr(\Pi'_k))/(2dd')= (p_k+p'_k)/2$ we have 
$(\vec p+\vec p\,')/2\in \rankb(G)$.
Iterating this argument, any point $q\vec p+(1-q) \vec p'$ with $0\le q\le 1$ is 
arbitrarily close to some element of $\rankb(G)$ since any such $q$ can be 
arbitrarily well approximated by a fraction $x/2^n$ with $x,n\in \mathbb N$.
Hence $\overline{\rankb}(G)$ is convex.

\vspace{1ex} \noindent (ii)
Any extremal point $\vec a$ of $\stabb(G)$ is given by some independent set $I$ 
 of $G$ via $a_v=1$ if $v\in I$ and $a_v=0$ else.
Then $(a_v \id_d)_v$ is a $d$-dimensional PR with $\vec r=\vec ad$, that is, 
 $\vec a\in \rankb(G)$.
Since $\stabb(G)$ is the convex hull of its extremal points and $\rankb(G)$ is 
 convex, the assertion follows.

\vspace{1ex} \noindent (iii)
By definition, $\rankb(G)$ consists of all probability assignments involving 
the completely depolarized state and $\thb(G)$ consists of all probability 
assignments for any quantum state.
Since $\thb(G)$ is closed\cite{grotschel1986relaxations}, the assertion 
follows.

\section{Proof of Theorem~\ref{thm:stabisrank}}
\label{app:stabisrank}
The proof of Theorem~\ref{thm:stabisrank} is based on an exhaustive test of all
graphs with no more than $8$ vertices. Since the exact description of $\rankb(G)$
is difficult, we propose a linear relaxation of  $\rankb(G)$ by using the dimension
relations of union and intersection of subspaces.
Note that each rank-$r$ projector corresponds to a $r$-dimensional subspace.
More explicitly, for a given $d$-dimensional projector $\Pi$, denote $\Pi^s$ as 
the subspace spanned by all the vectors $\{\Pi \vec{v} | \forall \vec{v}\}$.  
Then we know that
\begin{eqnarray}\label{eq:relaxation}
 &\dim(\Pi^s) = \rank(\Pi) = \tr(\Pi) \leq d,\nonumber\\
 &\dim \Pi_1^s + \dim \Pi_2^s = \dim(\Pi_1^s+\Pi_2^s) + \dim(\Pi_1^s \cap 
 \Pi_2^s),\nonumber\\
 &\dim(\Pi_1^s \cap \Pi_2^s) \leq \min \{\dim \Pi_1^s, \dim \Pi_2^s\},
\end{eqnarray}
where $\Pi_1^s+\Pi_2^s = \{\vec{v}_1 + \vec{v}_2 |\forall \vec{v}_1 \in 
\Pi_1^s, \vec{v}_2 \in \Pi_2^s \}$ and $\Pi_1^s\cap\Pi_2^s = \{\vec{v} |\vec{v} 
\in \Pi_1^s \text{ and } \vec{v} \in \Pi_2^s \}$. To take more advantage of 
these relations, we consider the intersections of subspaces which are related 
to the projectors in the PR.  Denote $\Pi_I = \cap_{i\in 
I} \Pi_i$ for a given set $I$ of vertices in $G$ and let $\Pi_{\emptyset} = 
\id$. By definition, $\Pi_I = 0$ if  $I$ is not an independent set. This 
implies that  $\Pi_{I_1}$ and  $\Pi_{I_2}$ are orthogonal if  $I_1 \cup I_2$ is 
no longer an independent set for two given independent sets  $I_1, I_2$.

For a given graph $G$, denote the set of all independent sets as $\mathcal{I}$.  
Then define the corresponding independent set graph $\mathcal{G}$ as the graph 
such that
\begin{eqnarray}
  &V(\mathcal{G}) = \{v_I\}_{I\in \mathcal{I}},\ E(\mathcal{G}) = \{[v_{I_1}, 
v_{I_2}]|  \text{ if } I_1 \cup I_2 \not\in \mathcal{I}, I_1, I_2 \in 
\mathcal{I}\}.
\end{eqnarray}
For example, if $G = C_{5}$ is the $5$-cycle graph, then the independent set graph  
$\mathcal{G}$ is as shown in Fig.~\ref{fig:hypergh}.
 \begin{figure}[htpb]
  \centering
  \includegraphics[width=0.4\textwidth]{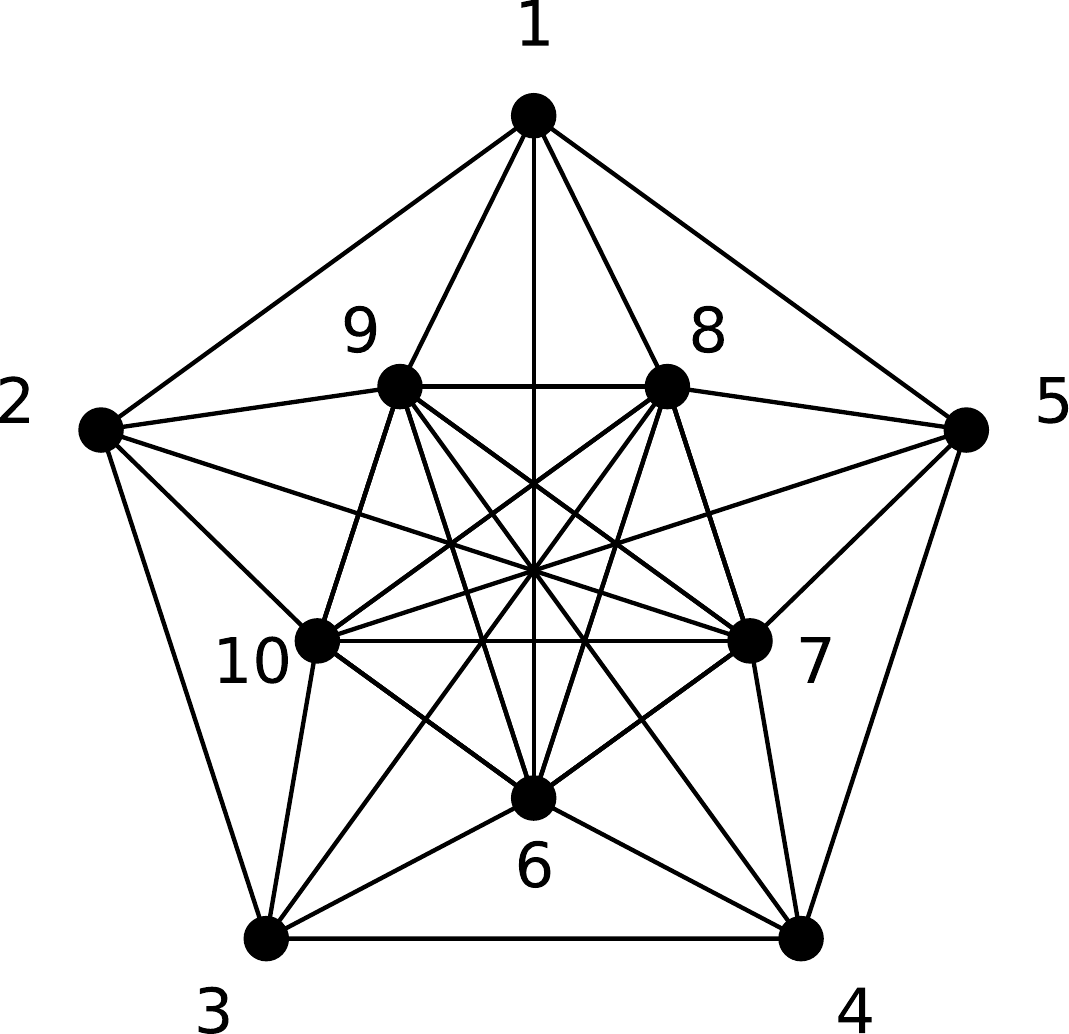}
  \caption{The independent set graph $\mathcal{G}$ for the  $5$-cycle graph 
  $C_5$, where the vertex $i$ represents the independent set  $\{i\} $ for 
$i=1,2,3,4,5$.  The vertices $6,7,8,9,10$ represent the independent sets 
$\{2,5\}, \{1,3\}, \{2,4\}, \{3,5\}, \{1,4\}$, respectively.}
  \label{fig:hypergh}
\end{figure}

Denote $\mathcal{C}$ as the set of all cliques in $\mathcal{G}$. For a given 
clique $C \in \mathcal{C}$, denote $H_C$ as the set of vertices in 
$V(\mathcal{G})$ which are connected to all vertices in $C$. That is,
\begin{equation}
  H_C := \{v_I|v_I\in V(\mathcal{G}), C \cup \{v_I\} \in \mathcal{C} \}.
\end{equation}
Then we have the following constraints on the PRs of $G$:
\begin{eqnarray}\label{eq:relaxation2}
  &\Pi_{I_1}^s \perp \Pi_{I_2}^s \text{ if } v_{I_1}, v_{I_2} \in C \Rightarrow 
  \sum_{v_I\in C}\overline{\dim}(\Pi^s_I) \leq 1, \forall C\in 
  \mathcal{C},\nonumber\\
  & \Pi_{I_1}^s + \Pi_{I_2}^s \subseteq \Pi_{I_1 \cap I_2}^s \Rightarrow 
  \sum_{i=1,2} \overline{\dim}(\Pi^s_{I_i}) \leq \overline{\dim}(\Pi^s_{I_1 
  \cap I_2}) + \overline{\dim}(\Pi^s_{I_1 \cup I_2}), \forall I_1, 
  I_2,\nonumber\\
  & \forall v_{I_1}, v_{I_2} \in H_C \Rightarrow \Pi^s_{I_1}+\Pi^s_{I_2} \perp 
  \sum_{v_I \in C} \Pi^s_I\nonumber\\ &\hspace{6em}\Rightarrow \sum_{v_I\in 
  C}\overline{\dim}(\Pi^s_I) + \sum_{i=1,2}\overline{\dim}(\Pi^s_{I_i}) \leq 1 
  + \overline{\dim}(\Pi^s_{I_1 \cup I_2}),
\end{eqnarray}
where $\overline{\dim}(\Pi) = \dim(\Pi)/d$.

By combining all the constraints in Eq.~\eqref{eq:relaxation2} with the 
non-negativity constraints, we have a polytope whose elements are possible 
values for $\{\overline{\dim}(\Pi_{I}^s)\}_{I \in \mathcal{I}} $. If we only 
consider the possible values of $\{\overline{\dim}(\Pi_{\{v_i\}})\}_{v_i \in 
V(G)}$, then we have a linear relaxation of $\rankb(G)$. We denote such a 
linear relaxation as $\lrankb(G)$. Note that we can add extra 
constraints that $\dim(\Pi_I) \in \naturals, \forall I\in \mathcal{I}$ if we 
only focus on a specific dimension $d$.

For a given graph, we can calculate $\lrankb(G)$ as described above with 
computer programs. If $\lrankb(G) = \stabb(G)$, then we know that $\rankb(G) = 
\stabb(G)$. As it turns out,  $\lrankb(G) = \stabb(G)$ if  $G$ is a graph with 
no more than  $8$ vertices. Thus, we have proved Theorem
\ref{thm:stabisrank}.

To have a closer look at this linear relaxation method, we illustrate it with 
odd cycles. It is known that $\stabb(G) = \thb(G)$ if $G$ is 
perfect~\cite{lovasz1979shannon}, which means that those graphs cannot be used 
to reveal quantum contextuality.
Recall that a graph is called perfect if all the induced subgraph of $G$ are 
not odd cycles or odd anti-cycles~\cite{chudnovsky2006strong}. Hence, odd 
cycles and odd anti-cycles are basic in the study of quantum 
contextuality~\cite{cabello2013basic}. Note that $\stabb(G)$ is a polytope 
which can be determined by the set of its facets $I(G,\vec{w}) = 
\alpha(G,\vec{w})$, where $\vec{w} \geq 0$. Each point outside of $\stabb(G)$ 
violates at least one of the tight inequalities, i.e., the inequalities 
defining the facets. For a given facet $I(G,\vec{w}) = \alpha(G,\vec{w})$, if 
the subgraph of $\{i|w_i>0\}$ is a clique, then we say that this facet is 
trivial.  This is because $\max I(G,\vec{w}) = 1$ in both the NCHV case and the 
quantum case. Thus, we only need to consider the non-trivial tight inequalities 
one by one. For the odd cycle $C_{2n+1}$ in Fig.~\ref{fig:multistate_eg}, the 
only non-trivial facet is~\cite{chudnovsky2006strong}
\begin{equation}
  \sum_{i=1}^{2n+1} P(\varepsilon_i) \leq n = \alpha(C_{2n+1}).
\end{equation}
If $\{\Pi_i\}_{i=1}^{2n+1}$ is a PR of the odd cycle $C_{2n+1}$, then 
Eq.\eqref{eq:relaxation2} implies that
\begin{eqnarray}\label{eq:oddcycle}
 & \overline{\dim}(\Pi^s_1) + \overline{\dim}(\Pi^s_2) + 
 \overline{\dim}(\Pi^s_{2n+1}) \leq 1 + 
 \overline{\dim}(\Pi^s_{\{2,2n+1\}}),\nonumber\\
 & \overline{\dim}(\Pi^s_{I_k}) + \overline{\dim}(\Pi^s_{k+1}) + 
 \overline{\dim}(\Pi^s_{2n-k}) \leq 1+ \overline{\dim}(\Pi^s_{I_{k+1}}),\ 
 \forall k = 1,\ldots,n-2,\nonumber\\
 & \overline{\dim}(\Pi^s_{I_{n-1}}) + \overline{\dim}(\Pi^s_{n+1}) + 
\overline{\dim}(\Pi^s_{n+2}) \leq 1,
\end{eqnarray}
where $I_k = \cup_{j=1}^{k}\{2j,2(n-j)+3\}$. Equation~\eqref{eq:oddcycle} 
implies that, for any PR $\{\Pi_i\}_{i=1}^{2n+1}$,
\begin{equation}\label{eq:cycleineq}
  \sum_{i=1}^{2n+1} \overline{\dim}(\Pi^s_i) \leq n.
\end{equation}
Thus, $\stabb(G) = \rankb(G)$ if $G$ is an odd cycle.

\begin{figure}[htpb]
  \centering
  \includegraphics[width=0.3\textwidth]{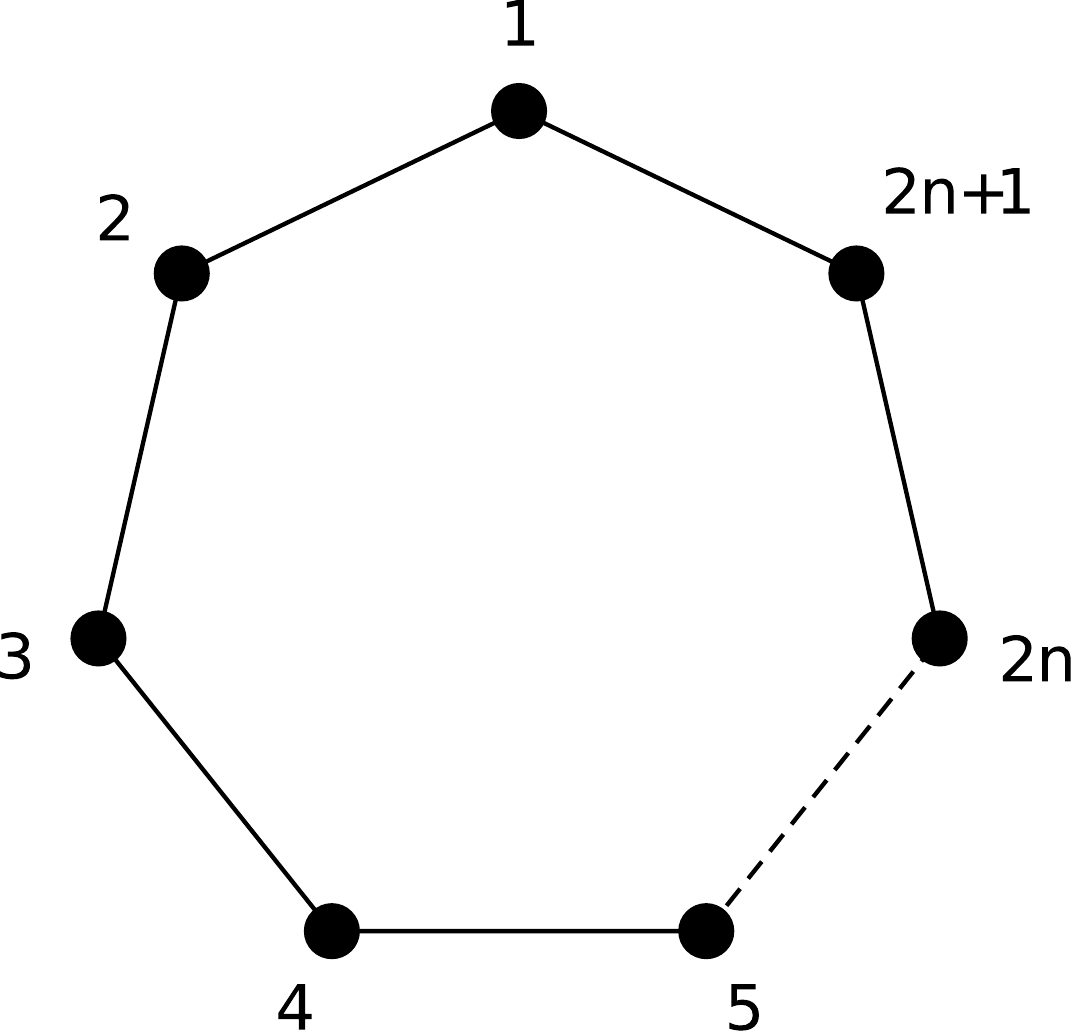}
  \caption{Odd cycle $C_{2n+1}$.}
  \label{fig:multistate_eg}
\end{figure}

\section{Implementation of the alternating optimization}
\label{app:seesaw}
Note that there exists a $(d\times n)$-matrix $Y$ such that $R=Y^\dagger Y$ if 
and only if $R\ge 0$ and $\rank(R)\le d$. Then, the fast implementation of the 
alternating optimization is based on the fact that the following two 
optimizations can be evaluated analytically:
\begin{equation}
  \begin{aligned}
    \min_{R}&\quad && \norm{R-X}_F\\
    \mathrm{s.t.}& && R\ge 0,~\rank(R)\le d,\\
  \end{aligned}
  \label{eq:optR}
\end{equation}
\begin{equation}
  \begin{aligned}
    \min_{L}&\quad && \norm{L-X}_F\\
    \mathrm{s.t.}& && L_{kk}=1,~L_{k\ell}=0~~\forall [k,\ell]\in E(G),\\
  \end{aligned}
  \label{eq:optL}
\end{equation}
where the Frobenius norm is defined as $\norm{M}_F=\tr(M^\dagger 
M)=\sum_{k\ell}\abs{M_{k\ell}}^2$.

The first optimization can be solved using a semidefinite variant of the 
Eckart--Young--Mirsky theorem \cite{eckart1936approximation}, which states that 
for any $n\times n$ matrix $M$, the best rank-$d$ (more precisely, rank no 
larger than $d$) approximation with respect the Frobenius norm (that is, 
$\min_{\rank(M_d)\le d}\norm{M_d-M}_F$) is achieved by
\begin{equation}
  M_d=U \diag(s_1,s_2,\dots,s_d,0,\dots,0) V^\dagger
  \label{eq:eckart}
\end{equation}
where $M=U\diag(s_1,s_2,\dots,s_n)V^\dagger$ is the singular value 
decomposition of $M$, and the singular values satisfy that $s_1\ge 
s_2\ge\dots\ge s_n\ge0$.
We mention that $M_d$ is not unique if $s_d$ is a degenerate singular value.
Now, let us consider the optimization in Eq.~\eqref{eq:optR}. As $X$ is 
Hermitian, it admits the decomposition $X=X^+-X^-$, where $X^+=P^+X P^+\ge0$, 
$X^-=-P^-X P^-\ge0$, and
\begin{equation}
  \begin{aligned}
    P^+&=\sum_{\lambda_k\ge 0}\ket{\varphi_k}\bra{\varphi_k},\\
    P^-&=\sum_{\lambda_k< 0}\ket{\varphi_k}\bra{\varphi_k}.
  \end{aligned}
  \label{eq:Xpm}
\end{equation}
Here $\lambda_1\ge\lambda_2\ge\dots\ge\lambda_n$ are the eigenvalues of $X$, 
and $\ket{\varphi_i}$ are the corresponding eigenvectors. Furthermore, let 
$R^+=P^+RP^+$, $R^-=P^-RP^-$, and
\begin{equation}
  X_d^+=\sum_{k\le d,\lambda_k\ge 0}\lambda_k\ket{\varphi_k}\bra{\varphi_k},
  \label{eq:optRvar}
\end{equation}
then the optimization in Eq.~\eqref{eq:optR} satisfies that
\begin{equation}
  \begin{aligned}
  \norm{R-X}_F&\ge\norm{R^++R^--X^++X^-}_F\\
  &=\norm{R^+-X^+}_F+\norm{R^-+X^-}_F\\
  &\ge\norm{X^+_d-X^+}_F+\norm{X^-}_F,
  \end{aligned}
  \label{eq:optRtri}
\end{equation}
where the first two lines follow from that 
$\norm{M}_F\ge\norm{P^+MP^++P^-MP^-}_F=\norm{P^+MP^+}_F+\norm{P^-MP^-}_F$, and 
the last line follows from the Eckart--Young--Mirsky theorem as well as the 
facts that $\rank(R^+)=\rank(P^+RP^+)\le\rank(R)\le d$ and 
$\norm{M_1+M_2}_F\ge\norm{M_1}_F$ when $M_1,M_2\ge 0$.
Moreover, one can easily verify that all inequalities in Eq.~\eqref{eq:optRtri} 
are saturated when $R=X_d^+$, because $P^+X_d^+P^+=X_d^+$ and $P^-X_d^+P^-=0$.  
By noting that $X_d^+$ satisfies that $X_d^+\ge 0$ and $\rank(X_d^+)\le d$, we 
get that the optimization in Eq.~\eqref{eq:optR} is achieved when $R=X_d^+$, 
which gives the solution
\begin{equation}
  \sum_{k\ge d+1,\lambda_k\ge 0}\lambda_k^2
  +\sum_{\lambda_k<0}\lambda_k^2.
  \label{eq:optRsol}
\end{equation}

The solution of the second optimization in Eq.~\eqref{eq:optL} follows directly 
from the definition of the Frobenius norm 
$\norm{M}_F=\sum_{k\ell}\abs{M_{k\ell}}^2$.  One can easily verify that the 
minimization is achieved when
\begin{equation}
  \begin{aligned}
    &L_{kk}=1,\quad &&k=1,2,\dots,n\\
    &L_{k\ell}=0,\quad &&[k,\ell]\in E(G),\\
    &L_{k\ell}=X_{k\ell},\quad &&k\ne\ell\text{ and }[k,\ell]\notin E(G),\\
  \end{aligned}
  \label{eq:optLvar}
\end{equation}
and the solution is
\begin{equation}
  \sum_{k=1}^d(1-X_{kk})^2+\sum_{[k,\ell]\in E(G)}\abs{X_{k\ell}}^2.
  \label{eq:optLsol}
\end{equation}

\bibliography{sic}

\begin{thebibliography}{35}%
\makeatletter
\providecommand \@ifxundefined [1]{%
 \@ifx{#1\undefined}
}%
\providecommand \@ifnum [1]{%
 \ifnum #1\expandafter \@firstoftwo
 \else \expandafter \@secondoftwo
 \fi
}%
\providecommand \@ifx [1]{%
 \ifx #1\expandafter \@firstoftwo
 \else \expandafter \@secondoftwo
 \fi
}%
\providecommand \natexlab [1]{#1}%
\providecommand \enquote  [1]{``#1''}%
\providecommand \bibnamefont  [1]{#1}%
\providecommand \bibfnamefont [1]{#1}%
\providecommand \citenamefont [1]{#1}%
\providecommand \href@noop [0]{\@secondoftwo}%
\providecommand \href [0]{\begingroup \@sanitize@url \@href}%
\providecommand \@href[1]{\@@startlink{#1}\@@href}%
\providecommand \@@href[1]{\endgroup#1\@@endlink}%
\providecommand \@sanitize@url [0]{\catcode `\\12\catcode `\$12\catcode
  `\&12\catcode `\#12\catcode `\^12\catcode `\_12\catcode `\%12\relax}%
\providecommand \@@startlink[1]{}%
\providecommand \@@endlink[0]{}%
\providecommand \url  [0]{\begingroup\@sanitize@url \@url }%
\providecommand \@url [1]{\endgroup\@href {#1}{\urlprefix }}%
\providecommand \urlprefix  [0]{URL }%
\providecommand \Eprint [0]{\href }%
\providecommand \doibase [0]{https://doi.org/}%
\providecommand \selectlanguage [0]{\@gobble}%
\providecommand \bibinfo  [0]{\@secondoftwo}%
\providecommand \bibfield  [0]{\@secondoftwo}%
\providecommand \translation [1]{[#1]}%
\providecommand \BibitemOpen [0]{}%
\providecommand \bibitemStop [0]{}%
\providecommand \bibitemNoStop [0]{.\EOS\space}%
\providecommand \EOS [0]{\spacefactor3000\relax}%
\providecommand \BibitemShut  [1]{\csname bibitem#1\endcsname}%
\let\auto@bib@innerbib\@empty
\bibitem [{\citenamefont {Xu}\ and\ \citenamefont
  {Cabello}(2019)}]{xu2019necessary}%
  \BibitemOpen
  \bibfield  {author} {\bibinfo {author} {\bibfnamefont {Z.-P.}\ \bibnamefont
  {Xu}}\ and\ \bibinfo {author} {\bibfnamefont {A.}~\bibnamefont {Cabello}},\
  }\bibfield  {title} {\enquote {\bibinfo {title} {Necessary and sufficient
  condition for contextuality from incompatibility},}\ }\href
  {https://doi.org/10.1103/PhysRevA.99.020103} {\bibfield  {journal} {\bibinfo
  {journal} {Phys. Rev. A}\ }\textbf {\bibinfo {volume} {99}},\ \bibinfo
  {pages} {020103} (\bibinfo {year} {2019})}\BibitemShut {NoStop}%
\bibitem [{\citenamefont {Bell}(1964)}]{bell1964epr}%
  \BibitemOpen
  \bibfield  {author} {\bibinfo {author} {\bibfnamefont {J.~S.}\ \bibnamefont
  {Bell}},\ }\bibfield  {title} {\enquote {\bibinfo {title} {On the {Einstein}
  {Podolsky} {Rosen} paradox},}\ }\href
  {https://doi.org/10.1103/PhysicsPhysiqueFizika.1.195} {\bibfield  {journal}
  {\bibinfo  {journal} {Physics}\ }\textbf {\bibinfo {volume} {1}},\ \bibinfo
  {pages} {195} (\bibinfo {year} {1964})}\BibitemShut {NoStop}%
\bibitem [{\citenamefont {Cubitt}\ \emph {et~al.}(2010)\citenamefont {Cubitt},
  \citenamefont {Leung}, \citenamefont {Matthews},\ and\ \citenamefont
  {Winter}}]{cubitt2010improving}%
  \BibitemOpen
  \bibfield  {author} {\bibinfo {author} {\bibfnamefont {T.~S.}\ \bibnamefont
  {Cubitt}}, \bibinfo {author} {\bibfnamefont {D.}~\bibnamefont {Leung}},
  \bibinfo {author} {\bibfnamefont {W.}~\bibnamefont {Matthews}},\ and\
  \bibinfo {author} {\bibfnamefont {A.}~\bibnamefont {Winter}},\ }\bibfield
  {title} {\enquote {\bibinfo {title} {Improving zero-error classical
  communication with entanglement},}\ }\href
  {https://doi.org/10.1103/PhysRevLett.104.230503} {\bibfield  {journal}
  {\bibinfo  {journal} {Phys. Rev. Lett.}\ }\textbf {\bibinfo {volume} {104}},\
  \bibinfo {pages} {230503} (\bibinfo {year} {2010})}\BibitemShut {NoStop}%
\bibitem [{\citenamefont {Saha}, \citenamefont {Horodecki},\ and\ \citenamefont
  {Paw{\l}owski}(2019)}]{saha2019state}%
  \BibitemOpen
  \bibfield  {author} {\bibinfo {author} {\bibfnamefont {D.}~\bibnamefont
  {Saha}}, \bibinfo {author} {\bibfnamefont {P.}~\bibnamefont {Horodecki}},\
  and\ \bibinfo {author} {\bibfnamefont {M.}~\bibnamefont {Paw{\l}owski}},\
  }\bibfield  {title} {\enquote {\bibinfo {title} {State independent
  contextuality advances one-way communication},}\ }\href
  {https://doi.org/10.1088/1367-2630/ab4149} {\bibfield  {journal} {\bibinfo
  {journal} {New J. Phys.}\ }\textbf {\bibinfo {volume} {21}},\ \bibinfo
  {pages} {093057} (\bibinfo {year} {2019})}\BibitemShut {NoStop}%
\bibitem [{\citenamefont {Cabello}\ \emph {et~al.}(2011)\citenamefont
  {Cabello}, \citenamefont {D'Ambrosio}, \citenamefont {Nagali},\ and\
  \citenamefont {Sciarrino}}]{cabello2011hybrid}%
  \BibitemOpen
  \bibfield  {author} {\bibinfo {author} {\bibfnamefont {A.}~\bibnamefont
  {Cabello}}, \bibinfo {author} {\bibfnamefont {V.}~\bibnamefont {D'Ambrosio}},
  \bibinfo {author} {\bibfnamefont {E.}~\bibnamefont {Nagali}},\ and\ \bibinfo
  {author} {\bibfnamefont {F.}~\bibnamefont {Sciarrino}},\ }\bibfield  {title}
  {\enquote {\bibinfo {title} {Hybrid ququart-encoded quantum cryptography
  protected by {Kochen}--{Specker} contextuality},}\ }\href
  {https://doi.org/10.1103/PhysRevA.84.030302} {\bibfield  {journal} {\bibinfo
  {journal} {Phys. Rev. A}\ }\textbf {\bibinfo {volume} {84}},\ \bibinfo
  {pages} {030302} (\bibinfo {year} {2011})}\BibitemShut {NoStop}%
\bibitem [{\citenamefont {Ekert}(1991)}]{ekert1991quantum}%
  \BibitemOpen
  \bibfield  {author} {\bibinfo {author} {\bibfnamefont {A.~K.}\ \bibnamefont
  {Ekert}},\ }\bibfield  {title} {\enquote {\bibinfo {title} {Quantum
  cryptography based on {Bell}'s theorem},}\ }\href
  {https://doi.org/10.1103/PhysRevLett.67.661} {\bibfield  {journal} {\bibinfo
  {journal} {Phys. Rev. Lett.}\ }\textbf {\bibinfo {volume} {67}},\ \bibinfo
  {pages} {661} (\bibinfo {year} {1991})}\BibitemShut {NoStop}%
\bibitem [{\citenamefont {Howard}\ \emph {et~al.}(2014)\citenamefont {Howard},
  \citenamefont {Wallman}, \citenamefont {Veitch},\ and\ \citenamefont
  {Emerson}}]{howard2014contextuality}%
  \BibitemOpen
  \bibfield  {author} {\bibinfo {author} {\bibfnamefont {M.}~\bibnamefont
  {Howard}}, \bibinfo {author} {\bibfnamefont {J.}~\bibnamefont {Wallman}},
  \bibinfo {author} {\bibfnamefont {V.}~\bibnamefont {Veitch}},\ and\ \bibinfo
  {author} {\bibfnamefont {J.}~\bibnamefont {Emerson}},\ }\bibfield  {title}
  {\enquote {\bibinfo {title} {Contextuality supplies the `magic' for quantum
  computation},}\ }\href {https://doi.org/10.1038/nature13460} {\bibfield
  {journal} {\bibinfo  {journal} {Nature (London)}\ }\textbf {\bibinfo {volume}
  {510}},\ \bibinfo {pages} {351} (\bibinfo {year} {2014})}\BibitemShut
  {NoStop}%
\bibitem [{\citenamefont {Raussendorf}(2013)}]{raussendorf2013contextuality}%
  \BibitemOpen
  \bibfield  {author} {\bibinfo {author} {\bibfnamefont {R.}~\bibnamefont
  {Raussendorf}},\ }\bibfield  {title} {\enquote {\bibinfo {title}
  {Contextuality in measurement-based quantum computation},}\ }\href
  {https://doi.org/10.1103/PhysRevA.88.022322} {\bibfield  {journal} {\bibinfo
  {journal} {Phys. Rev. A}\ }\textbf {\bibinfo {volume} {88}},\ \bibinfo
  {pages} {022322} (\bibinfo {year} {2013})}\BibitemShut {NoStop}%
\bibitem [{\citenamefont {Barrett}, \citenamefont {Hardy},\ and\ \citenamefont
  {Kent}(2005)}]{barrett2005no}%
  \BibitemOpen
  \bibfield  {author} {\bibinfo {author} {\bibfnamefont {J.}~\bibnamefont
  {Barrett}}, \bibinfo {author} {\bibfnamefont {L.}~\bibnamefont {Hardy}},\
  and\ \bibinfo {author} {\bibfnamefont {A.}~\bibnamefont {Kent}},\ }\bibfield
  {title} {\enquote {\bibinfo {title} {No signaling and quantum key
  distribution},}\ }\href {https://doi.org/10.1103/PhysRevLett.95.010503}
  {\bibfield  {journal} {\bibinfo  {journal} {Phys. Rev. Lett.}\ }\textbf
  {\bibinfo {volume} {95}},\ \bibinfo {pages} {010503} (\bibinfo {year}
  {2005})}\BibitemShut {NoStop}%
\bibitem [{\citenamefont {Kochen}\ and\ \citenamefont
  {Specker}(1968)}]{kochen1968problem}%
  \BibitemOpen
  \bibfield  {author} {\bibinfo {author} {\bibfnamefont {S.}~\bibnamefont
  {Kochen}}\ and\ \bibinfo {author} {\bibfnamefont {E.~P.}\ \bibnamefont
  {Specker}},\ }\bibfield  {title} {\enquote {\bibinfo {title} {The problem of
  hidden variables in quantum mechanics},}\ }\href
  {https://doi.org/10.1512/iumj.1968.17.17004} {\bibfield  {journal} {\bibinfo
  {journal} {Indiana Univ. Math. J.}\ }\textbf {\bibinfo {volume} {17}},\
  \bibinfo {pages} {59} (\bibinfo {year} {1968})}\BibitemShut {NoStop}%
\bibitem [{\citenamefont {Cabello}, \citenamefont {Estebaranz},\ and\
  \citenamefont {{Garc{\'i}a-Alcaine}}(1996)}]{cabello1996bell}%
  \BibitemOpen
  \bibfield  {author} {\bibinfo {author} {\bibfnamefont {A.}~\bibnamefont
  {Cabello}}, \bibinfo {author} {\bibfnamefont {J.}~\bibnamefont
  {Estebaranz}},\ and\ \bibinfo {author} {\bibfnamefont {G.}~\bibnamefont
  {{Garc{\'i}a-Alcaine}}},\ }\bibfield  {title} {\enquote {\bibinfo {title}
  {{Bell}--{Kochen}--{Specker} theorem: A proof with 18 vectors},}\ }\href
  {https://doi.org/10.1016/0375-9601(96)00134-X} {\bibfield  {journal}
  {\bibinfo  {journal} {Phys. Lett. A}\ }\textbf {\bibinfo {volume} {212}},\
  \bibinfo {pages} {183} (\bibinfo {year} {1996})}\BibitemShut {NoStop}%
\bibitem [{\citenamefont {Yu}\ and\ \citenamefont
  {Oh}(2012)}]{yu2012stateindependent}%
  \BibitemOpen
  \bibfield  {author} {\bibinfo {author} {\bibfnamefont {S.}~\bibnamefont
  {Yu}}\ and\ \bibinfo {author} {\bibfnamefont {C.~H.}\ \bibnamefont {Oh}},\
  }\bibfield  {title} {\enquote {\bibinfo {title} {State-independent proof of
  {Kochen}--{Specker} theorem with 13 rays},}\ }\href
  {https://doi.org/10.1103/PhysRevLett.108.030402} {\bibfield  {journal}
  {\bibinfo  {journal} {Phys. Rev. Lett.}\ }\textbf {\bibinfo {volume} {108}},\
  \bibinfo {pages} {030402} (\bibinfo {year} {2012})}\BibitemShut {NoStop}%
\bibitem [{\citenamefont {Cabello}, \citenamefont {Kleinmann},\ and\
  \citenamefont {Portillo}(2016)}]{cabello2016quantum}%
  \BibitemOpen
  \bibfield  {author} {\bibinfo {author} {\bibfnamefont {A.}~\bibnamefont
  {Cabello}}, \bibinfo {author} {\bibfnamefont {M.}~\bibnamefont {Kleinmann}},\
  and\ \bibinfo {author} {\bibfnamefont {J.~R.}\ \bibnamefont {Portillo}},\
  }\bibfield  {title} {\enquote {\bibinfo {title} {Quantum state-independent
  contextuality requires 13 rays},}\ }\href
  {https://doi.org/10.1088/1751-8113/49/38/38LT01} {\bibfield  {journal}
  {\bibinfo  {journal} {J. Phys. A: Math. Theor.}\ }\textbf {\bibinfo {volume}
  {49}},\ \bibinfo {pages} {38LT01} (\bibinfo {year} {2016})}\BibitemShut
  {NoStop}%
\bibitem [{\citenamefont {Peres}(1990)}]{peres1990incompatible}%
  \BibitemOpen
  \bibfield  {author} {\bibinfo {author} {\bibfnamefont {A.}~\bibnamefont
  {Peres}},\ }\bibfield  {title} {\enquote {\bibinfo {title} {Incompatible
  results of quantum measurements},}\ }\href@noop {} {\bibfield  {journal}
  {\bibinfo  {journal} {Physics Letters A}\ }\textbf {\bibinfo {volume}
  {151}},\ \bibinfo {pages} {107} (\bibinfo {year} {1990})}\BibitemShut
  {NoStop}%
\bibitem [{\citenamefont {Mermin}(1990)}]{mermin1990simple}%
  \BibitemOpen
  \bibfield  {author} {\bibinfo {author} {\bibfnamefont {N.~D.}\ \bibnamefont
  {Mermin}},\ }\bibfield  {title} {\enquote {\bibinfo {title} {Simple unified
  form for the major no-hidden-variables theorems},}\ }\href
  {https://doi.org/10.1103/PhysRevLett.65.3373} {\bibfield  {journal} {\bibinfo
   {journal} {Phys. Rev. Lett.}\ }\textbf {\bibinfo {volume} {65}},\ \bibinfo
  {pages} {3373} (\bibinfo {year} {1990})}\BibitemShut {NoStop}%
\bibitem [{\citenamefont {Kernaghan}\ and\ \citenamefont
  {Peres}(1995)}]{kernaghan1995kochen}%
  \BibitemOpen
  \bibfield  {author} {\bibinfo {author} {\bibfnamefont {M.}~\bibnamefont
  {Kernaghan}}\ and\ \bibinfo {author} {\bibfnamefont {A.}~\bibnamefont
  {Peres}},\ }\bibfield  {title} {\enquote {\bibinfo {title}
  {{Kochen}-{Specker} theorem for eight-dimensional space},}\ }\href
  {https://doi.org/10.1016/0375-9601(95)00012-R} {\bibfield  {journal}
  {\bibinfo  {journal} {Phys. Lett. A}\ }\textbf {\bibinfo {volume} {198}},\
  \bibinfo {pages} {1} (\bibinfo {year} {1995})}\BibitemShut {NoStop}%
\bibitem [{\citenamefont {Toh}(2013{\natexlab{a}})}]{toh2013kochen}%
  \BibitemOpen
  \bibfield  {author} {\bibinfo {author} {\bibfnamefont {S.~P.}\ \bibnamefont
  {Toh}},\ }\bibfield  {title} {\enquote {\bibinfo {title} {Kochen--{Specker}
  sets with a mixture of 16 rank-1 and 14 rank-2 projectors for a three-qubit
  system},}\ }\href {https://doi.org/10.1088/0256-307X/30/10/100302} {\bibfield
   {journal} {\bibinfo  {journal} {Chinese Phys. Lett.}\ }\textbf {\bibinfo
  {volume} {30}},\ \bibinfo {pages} {100302} (\bibinfo {year}
  {2013}{\natexlab{a}})}\BibitemShut {NoStop}%
\bibitem [{\citenamefont {Toh}(2013{\natexlab{b}})}]{toh2013stateindependent}%
  \BibitemOpen
  \bibfield  {author} {\bibinfo {author} {\bibfnamefont {S.~P.}\ \bibnamefont
  {Toh}},\ }\bibfield  {title} {\enquote {\bibinfo {title} {State-independent
  proof of {Kochen}--{Specker} theorem with thirty rank-two projectors},}\
  }\href {https://doi.org/10.1088/0256-307X/30/10/100303} {\bibfield  {journal}
  {\bibinfo  {journal} {Chinese Phys. Lett.}\ }\textbf {\bibinfo {volume}
  {30}},\ \bibinfo {pages} {100303} (\bibinfo {year}
  {2013}{\natexlab{b}})}\BibitemShut {NoStop}%
\bibitem [{\citenamefont {Mermin}(1993)}]{mermin1993hidden}%
  \BibitemOpen
  \bibfield  {author} {\bibinfo {author} {\bibfnamefont {N.~D.}\ \bibnamefont
  {Mermin}},\ }\bibfield  {title} {\enquote {\bibinfo {title} {Hidden variables
  and the two theorems of john bell},}\ }\href
  {https://doi.org/10.1103/RevModPhys.65.803} {\bibfield  {journal} {\bibinfo
  {journal} {Rev. Mod. Phys.}\ }\textbf {\bibinfo {volume} {65}},\ \bibinfo
  {pages} {803} (\bibinfo {year} {1993})}\BibitemShut {NoStop}%
\bibitem [{\citenamefont {Ramanathan}\ and\ \citenamefont
  {Horodecki}(2014)}]{ramanathan2014necessary}%
  \BibitemOpen
  \bibfield  {author} {\bibinfo {author} {\bibfnamefont {R.}~\bibnamefont
  {Ramanathan}}\ and\ \bibinfo {author} {\bibfnamefont {P.}~\bibnamefont
  {Horodecki}},\ }\bibfield  {title} {\enquote {\bibinfo {title} {Necessary and
  sufficient condition for state-independent contextual measurement
  scenarios},}\ }\href {https://doi.org/10.1103/PhysRevLett.112.040404}
  {\bibfield  {journal} {\bibinfo  {journal} {Phys. Rev. Lett.}\ }\textbf
  {\bibinfo {volume} {112}},\ \bibinfo {pages} {040404} (\bibinfo {year}
  {2014})}\BibitemShut {NoStop}%
\bibitem [{\citenamefont {Cabello}, \citenamefont {Severini},\ and\
  \citenamefont {Winter}(2014)}]{cabello2014graphtheoretic}%
  \BibitemOpen
  \bibfield  {author} {\bibinfo {author} {\bibfnamefont {A.}~\bibnamefont
  {Cabello}}, \bibinfo {author} {\bibfnamefont {S.}~\bibnamefont {Severini}},\
  and\ \bibinfo {author} {\bibfnamefont {A.}~\bibnamefont {Winter}},\
  }\bibfield  {title} {\enquote {\bibinfo {title} {Graph-theoretic approach to
  quantum correlations},}\ }\href
  {https://doi.org/10.1103/PhysRevLett.112.040401} {\bibfield  {journal}
  {\bibinfo  {journal} {Phys. Rev. Lett.}\ }\textbf {\bibinfo {volume} {112}},\
  \bibinfo {pages} {040401} (\bibinfo {year} {2014})}\BibitemShut {NoStop}%
\bibitem [{\citenamefont {Gr{\"o}tschel}, \citenamefont {Lov{\'a}sz},\ and\
  \citenamefont {Schrijver}(1984)}]{grotschel1984polynomial}%
  \BibitemOpen
  \bibfield  {author} {\bibinfo {author} {\bibfnamefont {M.}~\bibnamefont
  {Gr{\"o}tschel}}, \bibinfo {author} {\bibfnamefont {L.}~\bibnamefont
  {Lov{\'a}sz}},\ and\ \bibinfo {author} {\bibfnamefont {A.}~\bibnamefont
  {Schrijver}},\ }\bibfield  {title} {\enquote {\bibinfo {title} {Polynomial
  algorithms for perfect graphs},}\ }\href
  {https://doi.org/10.1016/S0304-0208(08)72943-8} {\bibfield  {journal}
  {\bibinfo  {journal} {Ann. Discrete. Math.}\ }\textbf {\bibinfo {volume}
  {21}},\ \bibinfo {pages} {325} (\bibinfo {year} {1984})}\BibitemShut
  {NoStop}%
\bibitem [{\citenamefont {Lovasz}(1979)}]{lovasz1979shannon}%
  \BibitemOpen
  \bibfield  {author} {\bibinfo {author} {\bibfnamefont {L.}~\bibnamefont
  {Lovasz}},\ }\bibfield  {title} {\enquote {\bibinfo {title} {On the {Shannon}
  capacity of a graph},}\ }\href {https://doi.org/10.1109/TIT.1979.1055985}
  {\bibfield  {journal} {\bibinfo  {journal} {IEEE Trans. Inf. Theory}\
  }\textbf {\bibinfo {volume} {25}},\ \bibinfo {pages} {1} (\bibinfo {year}
  {1979})}\BibitemShut {NoStop}%
\bibitem [{\citenamefont {Cabello}, \citenamefont {Kleinmann},\ and\
  \citenamefont {Budroni}(2015)}]{cabello2015necessary}%
  \BibitemOpen
  \bibfield  {author} {\bibinfo {author} {\bibfnamefont {A.}~\bibnamefont
  {Cabello}}, \bibinfo {author} {\bibfnamefont {M.}~\bibnamefont {Kleinmann}},\
  and\ \bibinfo {author} {\bibfnamefont {C.}~\bibnamefont {Budroni}},\
  }\bibfield  {title} {\enquote {\bibinfo {title} {Necessary and sufficient
  condition for quantum state-independent contextuality},}\ }\href
  {https://doi.org/10.1103/PhysRevLett.114.250402} {\bibfield  {journal}
  {\bibinfo  {journal} {Phys. Rev. Lett.}\ }\textbf {\bibinfo {volume} {114}},\
  \bibinfo {pages} {250402} (\bibinfo {year} {2015})}\BibitemShut {NoStop}%
\bibitem [{\citenamefont {Schrijver}(2004)}]{schrijver2004fractional}%
  \BibitemOpen
  \bibfield  {author} {\bibinfo {author} {\bibfnamefont {A.}~\bibnamefont
  {Schrijver}},\ }\enquote {\bibinfo {title} {Fractional and weighted colouring
  numbers},}\ in\ \href@noop {} {\emph {\bibinfo {booktitle} {Combinatorial
  Optimization}}}\ (\bibinfo  {publisher} {Springer-Verlag},\ \bibinfo
  {address} {Berlin},\ \bibinfo {year} {2004})\ p.\ \bibinfo {pages}
  {1096}\BibitemShut {NoStop}%
\bibitem [{\citenamefont {Man{\v c}inska}\ and\ \citenamefont
  {Roberson}(2016)}]{mancinska2016quantum}%
  \BibitemOpen
  \bibfield  {author} {\bibinfo {author} {\bibfnamefont {L.}~\bibnamefont
  {Man{\v c}inska}}\ and\ \bibinfo {author} {\bibfnamefont {D.~E.}\
  \bibnamefont {Roberson}},\ }\bibfield  {title} {\enquote {\bibinfo {title}
  {Quantum homomorphisms},}\ }\href
  {https://doi.org/10.1016/j.jctb.2015.12.009} {\bibfield  {journal} {\bibinfo
  {journal} {Journal of Combinatorial Theory, Series B}\ }\textbf {\bibinfo
  {volume} {118}},\ \bibinfo {pages} {228} (\bibinfo {year}
  {2016})}\BibitemShut {NoStop}%
\bibitem [{\citenamefont {McKay}\ and\ \citenamefont
  {Piperno}(2014)}]{mckay2014practical}%
  \BibitemOpen
  \bibfield  {author} {\bibinfo {author} {\bibfnamefont {B.~D.}\ \bibnamefont
  {McKay}}\ and\ \bibinfo {author} {\bibfnamefont {A.}~\bibnamefont
  {Piperno}},\ }\bibfield  {title} {\enquote {\bibinfo {title} {Practical graph
  isomorphism, {II}},}\ }\href {https://doi.org/10.1016/j.jsc.2013.09.003}
  {\bibfield  {journal} {\bibinfo  {journal} {J. Symb. Comput.}\ }\textbf
  {\bibinfo {volume} {60}},\ \bibinfo {pages} {94} (\bibinfo {year}
  {2014})}\BibitemShut {NoStop}%
\bibitem [{\citenamefont {Press}\ \emph {et~al.}(2007)\citenamefont {Press},
  \citenamefont {Teukolsky}, \citenamefont {Vetterling},\ and\ \citenamefont
  {Flannery}}]{press2007numerical}%
  \BibitemOpen
  \bibfield  {author} {\bibinfo {author} {\bibfnamefont {W.~H.}\ \bibnamefont
  {Press}}, \bibinfo {author} {\bibfnamefont {S.~A.}\ \bibnamefont
  {Teukolsky}}, \bibinfo {author} {\bibfnamefont {W.~T.}\ \bibnamefont
  {Vetterling}},\ and\ \bibinfo {author} {\bibfnamefont {B.~P.}\ \bibnamefont
  {Flannery}},\ }\href@noop {} {\emph {\bibinfo {title} {Numerical Recipes}}}\
  (\bibinfo  {publisher} {Cambridge University Press},\ \bibinfo {address}
  {Cambridge},\ \bibinfo {year} {2007})\BibitemShut {NoStop}%
\bibitem [{\citenamefont {Klyachko}\ \emph {et~al.}(2008)\citenamefont
  {Klyachko}, \citenamefont {Can}, \citenamefont {Binicio{\u g}lu},\ and\
  \citenamefont {Shumovsky}}]{klyachko2008simple}%
  \BibitemOpen
  \bibfield  {author} {\bibinfo {author} {\bibfnamefont {A.~A.}\ \bibnamefont
  {Klyachko}}, \bibinfo {author} {\bibfnamefont {M.~A.}\ \bibnamefont {Can}},
  \bibinfo {author} {\bibfnamefont {S.}~\bibnamefont {Binicio{\u g}lu}},\ and\
  \bibinfo {author} {\bibfnamefont {A.~S.}\ \bibnamefont {Shumovsky}},\
  }\bibfield  {title} {\enquote {\bibinfo {title} {Simple test for hidden
  variables in spin-1 systems},}\ }\href
  {https://doi.org/10.1103/PhysRevLett.101.020403} {\bibfield  {journal}
  {\bibinfo  {journal} {Phys. Rev. Lett.}\ }\textbf {\bibinfo {volume} {101}},\
  \bibinfo {pages} {020403} (\bibinfo {year} {2008})}\BibitemShut {NoStop}%
\bibitem [{\citenamefont {Gr\"otschel}, \citenamefont {Lov\'asz},\ and\
  \citenamefont {Schrijver}(1986)}]{grotschel1986relaxations}%
  \BibitemOpen
  \bibfield  {author} {\bibinfo {author} {\bibfnamefont {M.}~\bibnamefont
  {Gr\"otschel}}, \bibinfo {author} {\bibfnamefont {L.}~\bibnamefont
  {Lov\'asz}},\ and\ \bibinfo {author} {\bibfnamefont {A.}~\bibnamefont
  {Schrijver}},\ }\bibfield  {title} {\enquote {\bibinfo {title} {Relaxations
  of vertex packing},}\ }\href {https://doi.org/10.1007/BF02579273} {\bibfield
  {journal} {\bibinfo  {journal} {J. Comb. Theor.}\ }\textbf {\bibinfo {volume}
  {40}},\ \bibinfo {pages} {330} (\bibinfo {year} {1986})}\BibitemShut
  {NoStop}%
\bibitem [{\citenamefont {Knuth}(1994)}]{knuth1994sandwich}%
  \BibitemOpen
  \bibfield  {author} {\bibinfo {author} {\bibfnamefont {D.~E.}\ \bibnamefont
  {Knuth}},\ }\bibfield  {title} {\enquote {\bibinfo {title} {The sandwich
  theorem},}\ }\href {https://doi.org/10.37236/1193} {\bibfield  {journal}
  {\bibinfo  {journal} {Electron. J. Comb.}\ }\textbf {\bibinfo {volume} {1}},\
  \bibinfo {pages} {A1} (\bibinfo {year} {1994})}\BibitemShut {NoStop}%
\bibitem [{\citenamefont {Scheinerman}\ and\ \citenamefont
  {Ullman}(1997)}]{scheinerman1997fractional}%
  \BibitemOpen
  \bibfield  {author} {\bibinfo {author} {\bibfnamefont {E.~R.}\ \bibnamefont
  {Scheinerman}}\ and\ \bibinfo {author} {\bibfnamefont {D.~H.}\ \bibnamefont
  {Ullman}},\ }\href@noop {} {\emph {\bibinfo {title} {Fractional Graph Theory.
  A Rational Approach to the Theory of Graphs}}}\ (\bibinfo  {publisher}
  {Wiley},\ \bibinfo {address} {New York},\ \bibinfo {year} {1997})\BibitemShut
  {NoStop}%
\bibitem [{\citenamefont {Chudnovsky}\ \emph {et~al.}(2006)\citenamefont
  {Chudnovsky}, \citenamefont {Robertson}, \citenamefont {Seymour},\ and\
  \citenamefont {Thomas}}]{chudnovsky2006strong}%
  \BibitemOpen
  \bibfield  {author} {\bibinfo {author} {\bibfnamefont {M.}~\bibnamefont
  {Chudnovsky}}, \bibinfo {author} {\bibfnamefont {N.}~\bibnamefont
  {Robertson}}, \bibinfo {author} {\bibfnamefont {P.}~\bibnamefont {Seymour}},\
  and\ \bibinfo {author} {\bibfnamefont {R.}~\bibnamefont {Thomas}},\
  }\bibfield  {title} {\enquote {\bibinfo {title} {The strong perfect graph
  theorem},}\ }\href@noop {} {\bibfield  {journal} {\bibinfo  {journal} {Ann.
  Math.}\ ,\ \bibinfo {pages} {51}} (\bibinfo {year} {2006})}\BibitemShut
  {NoStop}%
\bibitem [{\citenamefont {Cabello}\ \emph {et~al.}(2013)\citenamefont
  {Cabello}, \citenamefont {Danielsen}, \citenamefont {{L{\'o}pez-Tarrida}},\
  and\ \citenamefont {Portillo}}]{cabello2013basic}%
  \BibitemOpen
  \bibfield  {author} {\bibinfo {author} {\bibfnamefont {A.}~\bibnamefont
  {Cabello}}, \bibinfo {author} {\bibfnamefont {L.~E.}\ \bibnamefont
  {Danielsen}}, \bibinfo {author} {\bibfnamefont {A.~J.}\ \bibnamefont
  {{L{\'o}pez-Tarrida}}},\ and\ \bibinfo {author} {\bibfnamefont {J.~R.}\
  \bibnamefont {Portillo}},\ }\bibfield  {title} {\enquote {\bibinfo {title}
  {Basic exclusivity graphs in quantum correlations},}\ }\href
  {https://doi.org/10.1103/PhysRevA.88.032104} {\bibfield  {journal} {\bibinfo
  {journal} {Phys. Rev. A}\ }\textbf {\bibinfo {volume} {88}},\ \bibinfo
  {pages} {032104} (\bibinfo {year} {2013})}\BibitemShut {NoStop}%
\bibitem [{\citenamefont {Eckart}\ and\ \citenamefont
  {Young}(1936)}]{eckart1936approximation}%
  \BibitemOpen
  \bibfield  {author} {\bibinfo {author} {\bibfnamefont {C.}~\bibnamefont
  {Eckart}}\ and\ \bibinfo {author} {\bibfnamefont {G.}~\bibnamefont {Young}},\
  }\bibfield  {title} {\enquote {\bibinfo {title} {The approximation of one
  matrix by another of lower rank},}\ }\href@noop {} {\bibfield  {journal}
  {\bibinfo  {journal} {Psychometrika}\ }\textbf {\bibinfo {volume} {1}},\
  \bibinfo {pages} {211} (\bibinfo {year} {1936})}\BibitemShut {NoStop}%
\end{thebibliography}%
\end{document}